\documentclass[10pt,twoside,a4paper,reqno]{amsart}
\usepackage{amsfonts,amsthm,latexsym,amsmath,amssymb,amscd,epsf}
\usepackage{graphicx}
\usepackage{float}

\theoremstyle{plain}
\newtheorem{theo}           {Theorem}
\newtheorem{pro}            {Proposition}
\newtheorem{coro}           {Corollary}
\newtheorem{lemm}           {Lemma}
\newtheorem{conj}           {Conjecture}
\newtheorem{defi}           {Definition}

\theoremstyle{definition}
\newtheorem{pr}              {Problem}
\newtheorem*{ack}            {Acknowledgements}

\newtheorem{nota}            {Notation}

\theoremstyle{remark}
\newtheorem{rem}             {Remark}

\newenvironment{theorem}{\begin{theo}}{\end{theo}}
\newenvironment{proposition}{\begin{pro}}{\end{pro}}
\newenvironment{corollary}{\begin{coro}}{\end{coro}}
\newenvironment{lemma}{\begin{lemm}}{\end{lemm}}
\newenvironment{remark}{\begin{rem}}{\end{rem}}
\newenvironment{definition}{\begin{defi}}{\end{defi}}

\newenvironment{problem}{\begin{pr}}{\end{pr}}
\newenvironment{conjecture}{\begin{conj}}{\end{conj}}

\newcommand {\dq}{\mathfrak d}
\newcommand {\C} {\mathcal C}
\newcommand {\CT} {\operatorname {CT}}
\DeclareMathOperator{\Conv}{Conv}

\begin{document}

\title[Van Vleck spectra of high-order Heun operators]
{Van Vleck spectra of high-order Heun operators:\ finite-band universality and exterior asymptotics}

\author{Boris Shapiro}
\address[Boris Shapiro]{
	Department of Mathematics, 
	Stockholm University,
	Kr\"{a}ftriket 5,
	SE - 106 91 Stockholm, Sweden}
\email{shapiro@math.su.se}

\keywords{generalized Lam\'e equation, Van Vleck and Heine-Stieltjes polynomials, 
asymptotic root distribution}
\subjclass[2020] {Primary 31A35, 34M35; Secondary 34E20, 30C15, 47B36}

\begin{abstract}
We study high-order analogues of the classical Heun operator of Fuchs index one,
\[
        \dq=\sum_{i=1}^k Q_i(z)\frac{d^i}{dz^i},
        \qquad \deg Q_i\le i+1,
        \qquad \deg Q_k=k+1.
\]
For a fixed degree $n$ we consider the linear Van Vleck polynomials $V$ for which
$\dq+V$ has a polynomial solution of degree $n$, and we form the spectral
polynomial $Sp_n$ whose zeros are the zeros of these Van Vleck polynomials.  The
main result is a finite-band determinant representation and the resulting
universality theorem: after normalization, all fixed power sums of the zeros of
$Sp_n$ have limits given by explicit constant-term formulae depending only on the
leading coefficient $Q_k$.  The lower coefficients of $\dq$ enter only lower order
correction terms.  Combining this with the localization theorem for Van Vleck
roots, we strengthen the usual germ-at-infinity conclusion to locally uniform
convergence of the normalized Cauchy transforms and logarithmic potentials on the
whole exterior of the convex hull of the zeros of $Q_k$.

We also prove a determinacy criterion: if the spectral roots are asymptotically
confined to a compact set with empty interior and connected complement, then the
finite-band moments determine the actual weak limit.  In particular, when the
zeros of $Q_k$ are collinear the root-counting measures of $Sp_n$ converge weakly
to a probability measure supported on the corresponding segment; this limit is
independent of all lower coefficients of $\dq$.  Finally, we prove holonomicity of
the exterior Cauchy transform and derive Picard--Fuchs equations for the WKB
periods, with an explicit third-order equation in the first non-classical case
$k=3$.  The paper ends with a precise mother-body conjecture for the genuinely
complex case, clearly separated from the unconditional results.
\end{abstract}

\maketitle

\section{Introduction and main results}
\label{sec:intro}

A \emph{generalized Lam\'e equation} \cite{WW} is a second order equation
\begin{equation}\label{eq:comLame}
        \left\{Q_2(z)\frac{d^2}{dz^2}+Q_1(z)\frac{d}{dz}+V(z)\right\}S(z)=0,
\end{equation}
where $Q_2$ is a polynomial of degree $l$ and $Q_1$ has degree at most
$l-1$.  Heine proved that, for a generic equation and for each positive
integer $n$, there are exactly $\binom{n+l-2}{n}$ polynomials $V$ for which
\eqref{eq:comLame} has a polynomial solution $S$ of degree $n$ \cite{He}.
The polynomial $V$ is a Van Vleck polynomial and $S$ is a Stieltjes polynomial.
The case $l=3$ is the classical Heun case.  If the roots of $Q_2$ are real,
then the corresponding Van Vleck zeros have a classical localization on the
real interval determined by the roots of $Q_2$; this is one of the origins of
the Heine--Stieltjes theory.

The aim of the present paper is to study the analogous spectral problem for
high-order operators of Fuchs index one,
\begin{equation}\label{eq:1}
        \dq=\sum_{i=1}^k Q_i(z)\frac{d^i}{dz^i},
        \qquad \deg Q_i\le i+1,
        \qquad \deg Q_k=k+1 .
\end{equation}
Following the classical terminology, we call such operators high-order Heun
operators.  For a fixed integer $n$ we consider all at most linear polynomials
$V(z)$ for which $\dq+V$ annihilates a polynomial of degree $n$.  For all
sufficiently large $n$ all these Van Vleck polynomials are linear, have the same
leading coefficient, and are $n+1$ in number counted with natural multiplicity;
see \cite{Sh} and \cite{BoSh}.  We write them in the form
\[
        V_{n,j}(z)=-\lambda_n(z-z_{n,j}),\qquad j=1,\ldots,n+1,
\]
and define the $n$-th spectral polynomial by
\[
        Sp_n(z)=\prod_{j=1}^{n+1}(z-z_{n,j}).
\]
The root-counting measure of $Sp_n$ will be denoted by $\mu_n$.

The classical second-order case was studied in \cite{ShT,STT}.  The main point
of the present paper is that, for arbitrary order $k$, the spectral polynomial
admits a finite-band determinant representation.  
 We shall use the following localization theorem from \cite{Sh,ShBook}.

\begin{theorem}[Localization of Van Vleck roots]\label{th:loc}
For every $\varepsilon>0$ there exists $n_\varepsilon$ such that, for every
$n>n_\varepsilon$ and every zero $z_{n,j}$ of $Sp_n$, the point $z_{n,j}$ lies in
the $\varepsilon$-neighbourhood of
\[
        K_Q=\Conv\{\zeta\in\mathbb C:Q_k(\zeta)=0\} .
\]
\end{theorem}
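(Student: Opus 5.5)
The idea is to locate Van Vleck roots through the Stieltjes polynomial. If $V=-\lambda_n(z-z_{n,j})$ and $S$ of degree $n$ solves $(\dq+V)S=0$, then $V(z)=-\dq S(z)/S(z)$. Evaluating at a point where this ratio is controlled by the roots of $S$ should force $z_{n,j}$ near $K_Q$. I would proceed in three steps.

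\emph{Step 1: leading coefficient.} Comparing the coefficients of $z^{n+1}$ in $\dq S+VS=0$ gives
\[
\lambda_n=\sum_{i=1}^k a_i\, n(n-1)\cdots(n-i+1),
\]
where $a_i$ is the coefficient of $z^{i+1}$ in $Q_i$. Hence $\lambda_n\sim a_k n^k$, and this is independent of $j$.

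\emph{Step 2: roots of $S$ lie near $K_Q$.} I would use a Gauss--Lucas type argument. Let $\zeta$ be a root of $S$ of maximal distance in a given direction, i.e.\ $\zeta$ maximizes $\operatorname{Re}(e^{-i\theta}z)$ over the roots. Suppose $\zeta$ is far outside $K_Q$ (distance $>\varepsilon$). The identity $\dq S=-VS$ vanishes at $\zeta$, so
\[
\sum_i Q_i(\zeta)S^{(i)}(\zeta)=0.
\]
This gives $Q_k(\zeta)S^{(k)}(\zeta)=-\sum_{i<k}Q_i(\zeta)S^{(i)}(\zeta)$. One then writes $S^{(i)}(\zeta)/S'(\zeta)$ in terms of the sums $\sum_{m\ne}1/(\zeta-w_m)$, where $w_m$ are the other roots. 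These sums have real part of a definite sign in direction $\theta$, and their size is $\gtrsim n$ when the roots cluster. In such a configuration the term $Q_k S^{(k)}$ dominates, which leads to a contradiction.

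A cleaner route is to work with the whole root set $\{w_m\}$ and the Cauchy transform $C_n(z)=\frac1n\sum 1/(z-w_m)$. Dividing $\dq S+VS=0$ by $n^kS$ gives, up to $O(1/n)$ terms,
\[
Q_k(z)\,C_n(z)^k \approx a_k(z-z_{n,j}).
\]
Any weak limit of the root-counting measures of $S$ therefore satisfies an algebraic equation $Q_kC^k=a_k(z-w)$ off its support.

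\emph{Step 3: transfer to $z_{n,j}$.} Suppose $z_{n,j}\to w\notin K_Q$ along a subsequence. Take the Stieltjes roots with their limiting measure $\nu$ and its Cauchy transform $C$. Near infinity we have $C\sim 1/z$, and
\[
C^k=a_k(z-w)/Q_k
\]
has no branching outside $\Conv(\operatorname{supp}\nu)\cup\{w\}$. The function $C^k$ is analytic outside $\operatorname{supp}\nu$, yet it vanishes at $w$. Choose a supporting line separating $w$ from $K_Q$, and look at the point $p$ of $\operatorname{supp}\nu$ extreme in the direction of $w$. Near $p$, the boundary behaviour of $C$ (its real part in that direction is positive on the outside) conflicts with $Q_k$ being root-free there. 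Alternatively, if $w$ lies outside $\Conv(\operatorname{supp}\nu)$, then $C(w)\ne0$, because $\operatorname{Re}(e^{-i\theta}C)>0$ beyond a supporting line. This contradicts $C(w)^k=0$. So $w\in\Conv(\operatorname{supp}\nu)$, and Step 2 places that hull inside $K_Q$.

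The main obstacle is Step 2: the localization of the Stieltjes roots uniformly in $j$. One must handle the lower-order terms $Q_i$, $i<k$, which are only $O(1/n)$ relative but act at an individual extreme root. I would first try the extreme-root argument, bounding $|S^{(i)}/S^{(k)}|(\zeta)$ by $O(n^{i-k})$ using the positivity of $\operatorname{Re}\,e^{-i\theta}/(\zeta-w_m)$. This is the argument of \cite{Sh}, which I would ultimately cite for the uniformity.
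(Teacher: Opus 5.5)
\textbf{Verdict: genuine gap.} Your Step~2, the uniform localization of the Stieltjes zeros, carries all the content, and the argument you sketch for it does not work.

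For context, the paper does not prove this theorem; it quotes it from \cite{Sh,ShBook}. By deferring Step~2 to \cite{Sh} you end up where the paper is, so what you have is a reduction rather than a proof.

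\textbf{What is sound.} Your reduction from Van Vleck zeros to Stieltjes zeros (the ``Alternatively'' part of Step~3) is correct. It is cleaner at finite $n$. Since $V(z_{n,j})=0$, one has $\dq S(w)=0$ at $w=z_{n,j}$. Suppose all zeros of $S$ lie in the $\varepsilon/2$-neighbourhood of $K_Q$ and $\operatorname{dist}(w,K_Q)\ge\varepsilon$. Then, by Gauss--Lucas, $w$ is strictly separated by a line from the zeros of every $S^{(l)}$. Hence $|S^{(l+1)}(w)/S^{(l)}(w)|\ge c_\varepsilon (n-l)/(1+|w|)$, and $Q_k(w)S^{(k)}(w)$ dominates every other term of $\dq S(w)$. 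This finite-$n$ version also covers unbounded $w$, which your subsequence argument tacitly excludes; alternatively, use $|z_{n,j}|\le\sup_n\|B_n\|$. The first variant of Step~3, with the extreme point $p$ of $\operatorname{supp}\nu$, is not an argument as written.

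\textbf{Why the half-plane argument fails.} Take $\zeta$ maximizing $\operatorname{Re}(e^{-i\theta}z)$ over the zeros. This gives only $\operatorname{Re}\big(e^{i\theta}/(\zeta-w_m)\big)\ge 0$, with no lower bound. Zeros on or near the supporting line but far from $\zeta$ contribute terms with negligible real part, and these can cancel; for example, a pair $\zeta\pm iLe^{i\theta}$ contributes exactly $0$. So nothing forces $|S^{(l+1)}(\zeta)/S^{(l)}(\zeta)|\gtrsim n$, and the claimed dominance of $Q_k(\zeta)S^{(k)}(\zeta)$ is unproved. In addition, $S'(\zeta)$ may vanish when $k\ge 3$, since zeros of multiplicity up to $k-1$ can occur at regular points.

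\textbf{Why the Cauchy-transform route cannot substitute.} Weak limits do not see $o(n)$ outlying zeros, and the route presupposes tightness. Your Step~3 needs every zero of $S_n$ to stay outside a fixed neighbourhood of $w$.

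\textbf{The missing idea: use a disk instead of a half-plane.}
\begin{enumerate}
\item Let $\zeta$ be a zero of $S$ farthest from a centre $c$. Then $\operatorname{Re}\frac{\zeta-c}{\zeta-y}\ge\frac12$ for every $y\neq\zeta$ with $|y-c|\le|\zeta-c|$.
\item Write $S=(z-\zeta)^pR$ with $R(\zeta)\ne0$ and $p\le k-1$; this bound on $p$ holds because $Q_k(\zeta)\ne 0$ makes $\zeta$ a regular point.
\item By Gauss--Lucas, the zeros of every $R^{(l)}$ stay in that disk and avoid its extreme point $\zeta$. Hence $|R^{(l+1)}(\zeta)|\ge\frac{n-p-l}{2|\zeta-c|}\,|R^{(l)}(\zeta)|$.
\item Therefore $|S^{(i)}(\zeta)|\le C_k\,(|\zeta-c|/n)^{k-i}\,|S^{(k)}(\zeta)|$ for $p\le i<k$.
\item Suppose $\{|z-c|\le\rho\}$ contains the $\varepsilon/2$-neighbourhood of $K_Q$ and $|\zeta-c|>\rho$. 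Then $|Q_i(\zeta)/Q_k(\zeta)|\le C_\varepsilon(1+|\zeta|)^{i-k}$. The identity $\sum_{i\ge p}Q_i(\zeta)S^{(i)}(\zeta)=0$ is then impossible for large $n$, uniformly in $S$.
\item Intersecting finitely many such disks confines all Stieltjes zeros to any prescribed neighbourhood of $K_Q$ for large $n$. This is exactly the input your Step~3 needs.
\end{enumerate}
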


We now state the main results.  After an affine change of variable we may assume
that $Q_k$ is monic and has a zero at the origin,
\[
        Q_k(z)=z^{k+1}+a_kz^k+a_{k-1}z^{k-1}+\cdots+a_1z .
\]
The following result is the unconditional finite-band moment theorem.  

\begin{theorem}[Finite-band moments]\label{th:1}
Let $z_{n,1},\ldots,z_{n,n+1}$ be the zeros of $Sp_n$.  For every fixed
$m\ge1$ the limit
\[
        M_m(Q_k)=\lim_{n\to\infty}\frac1{n+1}\sum_{j=1}^{n+1} z_{n,j}^m
\]
exists and is given by
\begin{equation}\label{eq:moment-formula}
        M_m(Q_k)=\int_0^1
        \CT_w\left[
        \left((1-\tau^k)w^{-1}
        -\tau^k(a_k+a_{k-1}w+\cdots+a_1w^{k-1})\right)^m
        \right]d\tau .
\end{equation}
Here $\CT_w$ denotes the constant term in the Laurent polynomial in $w$.
Consequently every subsequential weak limit of $\{\mu_n\}$ has the same
holomorphic moments, and these moments depend only on $Q_k$.
The formula is affine-covariant: although it is written in a coordinate in which
one zero of $Q_k$ is placed at the origin, the resulting moments in the original
coordinate are independent of the chosen affine normalization.
\end{theorem}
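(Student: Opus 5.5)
The plan is to realize $Sp_n$ as the characteristic polynomial of an explicit banded matrix, and then compute power sums as normalized traces of powers of that matrix.

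\emph{Step 1: a banded linear eigenvalue problem.} Write $Q_i(z)=\sum_{r=0}^{i+1}q_{i,r}z^r$. Since $\deg Q_i\le i+1$, we have
\[
\dq z^j=\sum_i j^{\underline i}\,Q_i(z)z^{j-i}\in\operatorname{span}\{z^{j+1},\dots,z^{j-k}\},
\]
where $j^{\underline i}=j(j-1)\cdots(j-i+1)$. For $S=\sum_{j\le n}s_jz^j$ and $V=-\lambda(z-z_0)$, the coefficient of $z^{n+1}$ in $(\dq+V)S$ vanishes iff
\[
\lambda=\lambda_n:=\sum_{i}q_{i,i+1}\,n^{\underline i}.
\]
For large $n$ this gives $\lambda_n=n^{\underline k}\bigl(1+O(1/n)\bigr)$, since $q_{k,k+1}=1$, which recovers the common leading coefficient. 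With this choice, $T_n:=z-\lambda_n^{-1}\dq$ maps $\mathcal P_n=\{\deg\le n\}$ into itself, and $(\dq+V)S=0$ becomes $T_nS=z_0S$. I will therefore identify $Sp_n(z)=\det(z-T_n)$, interpreting ``natural multiplicity'' as algebraic multiplicity; this is the one point to check against the convention of \cite{Sh,BoSh}. It then follows that
\[
\sum_j z_{n,j}^m=\operatorname{tr}T_n^m .
\]

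\emph{Step 2: asymptotics of the matrix entries.} In the monomial basis $e_j=z^j$, $j=0,\dots,n$, the entries are
\[
T_ne_j=\Bigl(1-\tfrac{1}{\lambda_n}\textstyle\sum_i q_{i,i+1}j^{\underline i}\Bigr)e_{j+1}-\tfrac{1}{\lambda_n}\textstyle\sum_{r\le k}\sum_i q_{i,r}\,j^{\underline i}\,e_{j+r-i}.
\]
Only the band $j-k\le\cdot\le j+1$ occurs, and all entries are bounded uniformly in $n$ because $j^{\underline i}/\lambda_n\le C$. Put $\tau=j/n$. The terms with $i<k$ contribute $O(1/n)$ uniformly, and $j^{\underline k}/\lambda_n=\tau^k+O(1/n)$. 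Using $q_{k,r}=a_r$ (with $a_{k+1}=1$ and $a_0=0$), the $e_{j+1}$-entry is $1-\tau^k+O(1/n)$, and the $e_{j+r-k}$-entry is $-\tau^ka_r+O(1/n)$ for $r=1,\dots,k$. Hence $T_n$ is a ``locally Toeplitz'' band matrix. Its symbol at $\tau$, recording a shift by $+1$ as $w^{-1}$ and a shift by $-(k-r)$ as $w^{k-r}$, is
\[
\sigma(\tau,w)=(1-\tau^k)w^{-1}-\tau^k\bigl(a_k+a_{k-1}w+\cdots+a_1w^{k-1}\bigr).
\]
This is exactly the Laurent polynomial appearing in \eqref{eq:moment-formula}.

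\emph{Step 3: traces of powers.} Expand $(T_n^m)_{jj}$ as a sum over closed lattice paths of length $m$ from $j$, with steps in $\{+1,0,-1,\dots,-(k-1)\}$. Each path stays within distance $mk$ of $j$. Since the entries vary by $O(m/n)$ along such a path, $(T_n^m)_{jj}=\CT_w[\sigma(j/n,w)^m]+O_m(1/n)$ for $mk\le j\le n-mk$. The $O(m)$ boundary rows each contribute a bounded amount. Therefore
\[
\frac{1}{n+1}\operatorname{tr}T_n^m=\frac1{n+1}\sum_j\CT_w[\sigma(j/n,w)^m]+O_m(1/n),
\]
and the right-hand side is a Riemann sum converging to the integral in \eqref{eq:moment-formula}. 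The main obstacle I expect is bookkeeping rather than conceptual: one has to verify that the $O(1/n)$ errors from $\lambda_n$ and from the lower $Q_i$ stay uniform in $j$, and that the truncation at $j=n$ (where $\lambda_n$ was chosen to kill the top coefficient) does not create an unbounded entry.

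\emph{Consequences.} By Theorem~\ref{th:loc}, all zeros lie in a fixed compact set for large $n$. Hence $z^m$ is bounded continuous on a common neighbourhood, and every subsequential weak limit of $\mu_n$ has moments $M_m(Q_k)$. Since $\lambda_n$ and the lower $Q_i$ enter only through $O(1/n)$ corrections, these moments depend only on $Q_k$.

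\emph{Affine covariance.} The zero set of $Sp_n$ is intrinsic, because it is defined by the spectral problem. Under $z\mapsto\alpha z+\beta$ the power sums transform by the binomial formula, with finitely many terms for each fixed $m$. So the limits in one coordinate determine the limits in any other, and the moments computed from different normalizations agree.
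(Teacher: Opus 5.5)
Your route is the paper's route. Your $T_n$ is the paper's $B_n=Z-\lambda_n^{-1}\dq$, and the paper likewise computes $\operatorname{tr}B_n^m$ as a sum over closed band paths, discards the $O(m)$ boundary rows, identifies the interior contribution with $\CT_w\, b_\tau(w)^m$, and recognizes a Riemann sum. Your remarks on subsequential limits and affine covariance are also fine. (A minor slip: in Step 2 the inner sum should run over $r\le i$, not $r\le k$; otherwise the $z^{j+1}$ terms of the lower $Q_i$ are counted twice. This is $O(1/n)$ and harmless.)

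There is one step you assume rather than prove, and it is not the multiplicity convention you flagged; the paper also counts with algebraic multiplicity. You only showed that every degree-$n$ Van Vleck parameter is an eigenvalue of $T_n$. The converse can fail. Put $\lambda(s)=\sum_i q_{i,i+1}s(s-1)\cdots(s-i+1)$. If $\lambda(s)=\lambda_n$ for some $0\le s<n$, then the coefficient $1-\lambda(s)/\lambda_n$ of $e_{s+1}$ in $T_ne_s$, which you wrote down in Step 2, vanishes. Then $T_n$ leaves $\mathcal P_s$ invariant, and $\det(z-T_n)$ contains the factor $\det(z-T_n|_{\mathcal P_s})$. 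Its $s+1$ roots have eigenvectors of degree at most $s<n$, so they are not zeros of $Sp_n$. In that case $\operatorname{tr}T_n^m\ne\sum_j z_{n,j}^m$. Nothing a priori prevents $s$ from being comparable to $n$, so the discrepancy need not vanish after division by $n+1$.

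The missing ingredient is Lemma~\ref{lem:no-spurious}: there is $n_0$ with $\lambda(s)\ne\lambda(n)$ for all $0\le s<n$ and all $n\ge n_0$. To see this, suppose $\lambda(s_\nu)=\lambda(n_\nu)$ with $s_\nu<n_\nu\to\infty$. If $s_\nu/n_\nu\to\rho<1$, dividing by $n_\nu^k$ gives $1-\rho^k=0$, which is impossible. If $\rho=1$, put $h_\nu=n_\nu-s_\nu\ge1$; then $(\lambda(n_\nu)-\lambda(s_\nu))/(h_\nu n_\nu^{k-1})\to k\ne0$, again impossible.

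Given this, $T_n$ is an unreduced Hessenberg matrix. Compare coefficients of $z^n,z^{n-1},\dots,z^1$ in $T_nS=z_0S$. Because $1-\lambda(s)/\lambda_n\ne0$ for every $s<n$, this determines $s_{n-1},s_{n-2},\dots,s_0$ successively from $s_n$. Hence every eigenvector has non-zero $z^n$-coefficient, and every root of $\det(z-T_n)$ is a genuine degree-$n$ Van Vleck parameter, so $Sp_n=\det(z-T_n)$ for $n\ge n_0$. With this inserted, the rest of your argument goes through as written.
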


Theorem~\ref{th:1} gives the exterior germ at infinity.  The next theorem
upgrades this germ to a global exterior statement.  Put
\[
        \Omega_Q=\mathbb C\setminus K_Q,
\]
and define the usual Cauchy transform and the logarithmic potential as 
\begin{align*}
        C_n(t)&=\int\frac{d\mu_n(z)}{t-z}
        =\frac1{n+1}\frac{Sp_n'(t)}{Sp_n(t)},\\
        U_n(t)&=\int\log|t-z|\,d\mu_n(z)
        =\frac1{n+1}\log|Sp_n(t)| .
\end{align*}

\begin{theorem}[Global exterior asymptotics]\label{th:2}
The functions $C_n$ converge locally uniformly in $\Omega_Q$ to an analytic
function $C_Q$ depending only on $Q_k$.  In a neighbourhood of infinity this
function is given by
\begin{equation}\label{eq:Cauchy-nu}
        C_Q(t)=\int_0^1 \frac{\partial}{\partial t}\log \Psi(t,\tau)\,d\tau,
\end{equation}
where $\Psi(t,\tau)$ is the branch 
of the algebraic equation
\begin{equation}\label{eq:rel-intro}
\Psi^k=(t+a_k\Theta)\Psi^{k-1}
+\Theta(1-\Theta)a_{k-1}\Psi^{k-2}
+\Theta(1-\Theta)^2a_{k-2}\Psi^{k-3}
+\cdots+
\Theta(1-\Theta)^{k-1}a_1,
\end{equation} 
satisfying $\Psi(t,\tau)=t+O(1)$ at infinity. Here  $\Theta=\tau^k$.  Moreover $U_n$ converges locally uniformly in $\Omega_Q$
to a harmonic function $U_Q^{\rm ext}$ normalized by
\[
        U_Q^{\rm ext}(t)=\log|t|+O(|t|^{-1}),\qquad t\to\infty,
\]
and satisfying
\[
        2\frac{\partial U_Q^{\rm ext}}{\partial t}=C_Q
        \qquad \text{in }\Omega_Q .
\]
Thus all subsequential weak limits of $\{\mu_n\}$ have the same exterior Cauchy
transform and the same exterior logarithmic potential on $\Omega_Q$.
\end{theorem}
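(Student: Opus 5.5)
The proof combines Theorem~\ref{th:loc} with Theorem~\ref{th:1} through a normal-families argument, and then identifies the germ at infinity with the algebraic function in \eqref{eq:rel-intro}.

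\emph{Step 1: uniform bounds and normal families.} Fix a compact set $F\subset\Omega_Q$ and put $\delta=\operatorname{dist}(F,K_Q)>0$. By Theorem~\ref{th:loc} with $\varepsilon=\delta/2$, for $n>n_{\delta/2}$ every zero of $Sp_n$ lies within $\delta/2$ of $K_Q$. Hence $|C_n(t)|\le 2/\delta$ on $F$. By Montel's theorem, $\{C_n\}$ is a normal family on $\Omega_Q$, including at $\infty$, where $C_n(t)=1/t+O(t^{-2})$ uniformly.

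\emph{Step 2: identification of limits.} At infinity,
\[
C_n(t)=\sum_{m\ge0}t^{-m-1}\frac1{n+1}\sum_j z_{n,j}^m ,
\]
and this expansion converges uniformly for $|t|>R$, with $R$ a fixed bound on $K_Q$ plus $1$. By Theorem~\ref{th:1}, every coefficient converges to $M_m(Q_k)$. The Cauchy estimate gives $|M_m|\le R^m$, so the coefficientwise convergence together with the uniform bounds yields $C_n\to\sum_m M_m t^{-m-1}$ uniformly on $|t|>2R$. Any subsequential limit on $\Omega_Q$ agrees with this germ. Since $\Omega_Q$ is connected (the complement of a compact convex set), the identity theorem shows that all subsequential limits coincide. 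Therefore the whole sequence converges locally uniformly to a single analytic function $C_Q$, and $C_Q$ depends only on $Q_k$ because the $M_m$ do.

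\emph{Step 3: the potentials.} Since $\Omega_Q$ is not simply connected, I cannot simply integrate $C_Q$; I work with $\log Sp_n$ instead. The function $h_n(t)=\frac1{n+1}\log\bigl(Sp_n(t)/t^{n+1}\bigr)$ is single-valued and holomorphic on $\{|t|>R'\}\cup\{\infty\}$ when all zeros lie in $|z|<R'$. For a general compact $F\subset\Omega_Q$, I use $U_n(t)-\log|t-t_0|$ with $t_0\in K_Q$. This is harmonic on $\Omega_Q\cup\{\infty\}$, vanishes at $\infty$, and has derivative $\tfrac12(C_n-1/(t-t_0))$. The derivatives converge locally uniformly by Step 2, and the values at $\infty$ are fixed, so the harmonic functions converge locally uniformly: integrate along paths from a base point near $\infty$, which is legitimate because each is a genuine single-valued harmonic function. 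The limit $U_Q^{\rm ext}$ then satisfies $2\partial_t U_Q^{\rm ext}=C_Q$ and $U_Q^{\rm ext}=\log|t|+O(|t|^{-1})$.

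\emph{Step 4: the formula \eqref{eq:Cauchy-nu}.} This is the main obstacle. I must show that
\[
\sum_m t^{-m-1}M_m=\int_0^1\partial_t\log\Psi\,d\tau .
\]
For fixed $\tau$ put $L(w)=(1-\Theta)w^{-1}-\Theta(a_k+\dots+a_1w^{k-1})$. Then
\[
\sum_m t^{-m-1}\CT_w[L^m]=\CT_w\frac{1}{t-L(w)}=\frac{1}{2\pi i}\oint\frac{dw}{w\,(t-L(w))}
\]
over a small circle. The plan is to evaluate this residue sum by writing it as $\partial_t\log$ of the product of the small roots $w$ of $t=L(w)$; this is the standard Lagrange-inversion identity $\CT[\sum L^m t^{-m-1}]=\partial_t\log\Psi$. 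Multiplying $t=L(w)$ by $w$ and substituting $\Psi=(1-\Theta)/w$ should turn the equation into \eqref{eq:rel-intro}:
\[
t\Psi^{k-1}\cdot(\dots)=\Psi^k-\Theta\sum_j a_j(1-\Theta)^{k-j}\Psi^{j-1}.
\]
I expect to check the exponents carefully here. The branch with $w\to0$, namely $\Psi\sim t$, is the selected one. Integrating in $\tau$ then gives \eqref{eq:Cauchy-nu}.
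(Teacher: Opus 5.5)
Your proposal is correct. Steps 1--3 follow the paper's argument: the bound $|C_n|\le 2/\delta$ from Theorem~\ref{th:loc}, normality, identification of every subsequential limit with the germ $\frac1t+\sum_{m\ge1}M_m t^{-m-1}$ from Theorem~\ref{th:1}, and the identity theorem on the connected set $\Omega_Q$. For the potentials, the paper uses normality of the bounded harmonic family $\{U_n\}$ together with the gradient and the normalization at infinity. Your subtraction of $\log|t-t_0|$, followed by integrating $C_n-1/(t-t_0)$ from $\infty$, is the same idea made explicit, and it handles correctly the fact that $\Omega_Q$ is not simply connected.

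The genuine difference is Step~4. The paper identifies the germ with $\int_0^1\partial_t\log\Psi\,d\tau$ by invoking the ratio asymptotics of the frozen constant-coefficient recurrences, $\frac1m\partial_t\log P_{m,\tau}\to\partial_t\log\Psi(t,\tau)$, and then averaging in $\tau$. You instead evaluate $\sum_m t^{-m-1}\CT_w b_\tau^m=\CT_w\,(t-b_\tau(w))^{-1}$ directly by residues, and this works:
\begin{itemize}
\item For $\tau<1$ the polynomial $p(w)=w\,(t-b_\tau(w))$ satisfies $p(0)=-(1-\Theta)\ne0$, so there is no pole at $w=0$.
\item By Rouch\'e on a fixed circle $|w|=1$, for $|t|$ large and uniformly in $\tau$, $p$ has exactly one zero $w_0\sim(1-\Theta)/t$ inside.
\item The residue is $1/p'(w_0)=-\partial_t w_0/w_0=\partial_t\log(1/w_0)=\partial_t\log\Psi$ with $\Psi=(1-\Theta)/w_0$.
\end{itemize}
Two small slips: it is $\partial_t\log$ of the reciprocal of the small root, not of the root itself. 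Also, the substituted relation is $t\Psi^{k-1}=\Psi^k-\Theta\sum_{j=1}^k a_j(1-\Theta)^{k-j}\Psi^{j-1}$, with no extra factor on the left. The uniform bound $|\CT_w b_\tau^m|\le(\max_{\tau,|w|=1}|b_\tau(w)|)^m$ justifies exchanging the $m$-sum with the $\tau$-integral.

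Your route is elementary and self-contained, needing no external asymptotic result for frozen Toeplitz recurrences. The paper's route has the advantage of tying $C_Q$ to the frozen Toeplitz zero distributions $\nu_{Q_k}(\tau)$, and hence to the averaged measure $\nu_\dq$ used in Section~\ref{sec:mother-body}.
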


The next theorem explains exactly when the moment information determines the
actual weak limit.  It is included because it gives a rigorous replacement for
the still conjectural tree statement in all cases where an independent
one-dimensional localization is available.

\begin{theorem}[Moment determinacy from one-dimensional localization]
\label{th:determinacy}
Assume that there exists a compact set $K\subset K_Q$ with empty interior and
connected complement such that, for every $\varepsilon>0$, all zeros of $Sp_n$
lie in the $\varepsilon$-neighbourhood of $K$ for all sufficiently large $n$.
Then $\mu_n$ converges weakly to a probability measure $\mu_Q$ supported on $K$.
The limit is uniquely characterized by
\[
        \int z^m\,d\mu_Q(z)=M_m(Q_k),\qquad m=0,1,2,\ldots,
\]
where $M_0(Q_k)=1$ and $M_m(Q_k)$ is given by \eqref{eq:moment-formula}.  In
particular the limit depends only on $Q_k$.
\end{theorem}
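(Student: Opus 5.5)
The plan is a compactness argument combined with Mergelyan's theorem. Everything is uniformly bounded: by the hypothesis together with Theorem~\ref{th:loc}, for large $n$ all $\mu_n$ are probability measures supported in a fixed compact set, say the $1$-neighbourhood $K_1$ of $K$. By Prokhorov (or Banach--Alaoglu) every subsequence of $\{\mu_n\}$ has a further subsequence converging weakly to a probability measure $\mu$.

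\textbf{Step 1: support of subsequential limits.} Let $\mu$ be such a limit. For each $\varepsilon>0$, the measures $\mu_n$ are eventually supported in the closed $\varepsilon$-neighbourhood $K_\varepsilon$ of $K$. Weak limits preserve closed supports, so $\operatorname{supp}\mu\subset K_\varepsilon$. Intersecting over $\varepsilon$ gives $\operatorname{supp}\mu\subset K$.

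\textbf{Step 2: identification of moments.} Since $z^m$ is continuous and bounded on $K_1$, weak convergence gives
\[
\int z^m\,d\mu=\lim\int z^m\,d\mu_{n_l}=M_m(Q_k)
\]
by Theorem~\ref{th:1}, and $M_0=1$ holds trivially.

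\textbf{Step 3: uniqueness.} This is the only substantive step. Suppose $\mu,\nu$ are probability measures on $K$ with the same holomorphic moments; then $\int p\,d(\mu-\nu)=0$ for every polynomial $p$. Since $K$ is compact with connected complement, Mergelyan's theorem says every function continuous on $K$ and holomorphic in its interior is a uniform limit of polynomials on $K$. Because $K$ has empty interior, this covers every $f\in C(K)$. Hence $\int f\,d(\mu-\nu)=0$ for all $f\in C(K)$, and by the Riesz representation theorem $\mu=\nu$.

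For a self-contained alternative in the collinear case one can use Weierstrass approximation. In general, Lavrentiev's theorem (the special case of Mergelyan for sets with empty interior) suffices. One could also argue via Cauchy transforms: equal moments give equal Cauchy transforms near $\infty$. By analyticity on the connected set $\mathbb C\setminus K$, they agree on all of $\mathbb C\setminus K$. Since $K$ has planar measure zero (a consequence of empty interior is not automatic, which is why I prefer Mergelyan), this would finish the argument. So I will use the Mergelyan route.

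\textbf{Step 4: conclusion.} All subsequential limits coincide with the unique measure $\mu_Q$ on $K$ having moments $M_m(Q_k)$. Together with relative compactness, this gives weak convergence of the whole sequence $\mu_n\to\mu_Q$. Dependence only on $Q_k$ follows from the moment formula \eqref{eq:moment-formula}.

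For the collinear case, if the zeros of $Q_k$ lie on a line, then $K_Q$ is a segment. Take $K=K_Q$: a segment has empty interior and connected complement, and the hypothesis is exactly Theorem~\ref{th:loc}. The main (mild) obstacle is the uniqueness in Step 3, which is handled by Mergelyan/Lavrentiev; everything else is routine.
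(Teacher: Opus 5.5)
Your proof is correct and is essentially the paper's argument. Both run tightness from localization, confinement of every subsequential limit to $K$, identification of the moments via Theorem~\ref{th:1}, and uniqueness through Mergelyan's theorem, which makes polynomials dense in $C(K)$ when $K$ has empty interior and connected complement. Your aside is also right to set aside the Cauchy-transform route, since empty interior does not force zero area, and it is not needed.
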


A direct and important consequence is the following full convergence theorem in
the collinear case.

\begin{theorem}[The collinear case]\label{th:collinear-convergence}
Assume that all zeros $\xi_0,\ldots,\xi_k$ of $Q_k$, counted with multiplicity,
lie on an affine line.  Then the root-counting measures $\mu_n$ converge weakly
to a probability measure $\mu_Q$ supported on the segment
\[
        K_Q=\Conv\{\xi_0,\ldots,\xi_k\}.
\]
This limit depends only on $Q_k$ and not on the lower coefficients of $\dq$.
If the line is the real line, then
\[
        \int x\,d\mu_Q(x)=\bar\xi,
        \qquad
        \bar\xi=\frac1{k+1}\sum_{j=0}^{k}\xi_j,
\]
and
\[
        \int (x-\bar\xi)^2\,d\mu_Q(x)
        =\frac{k}{2k+1}\,\frac1{k+1}\sum_{j=0}^{k}(\xi_j-\bar\xi)^2 .
\]
\end{theorem}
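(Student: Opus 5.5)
The plan is to obtain the convergence statement as a direct application of Theorem~\ref{th:determinacy}, and then to compute the first two moments from formula \eqref{eq:moment-formula}.

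\emph{Convergence.} If all zeros $\xi_0,\ldots,\xi_k$ lie on an affine line, then $K_Q=\Conv\{\xi_0,\ldots,\xi_k\}$ is a segment, or a single point in the degenerate case. In either case $K_Q$ is compact, has empty interior in $\mathbb C$, and has connected complement. Theorem~\ref{th:loc} says that for every $\varepsilon>0$ all zeros of $Sp_n$ lie in the $\varepsilon$-neighbourhood of $K_Q$ once $n$ is large. The hypothesis of Theorem~\ref{th:determinacy} therefore holds with $K=K_Q$. That theorem gives weak convergence $\mu_n\to\mu_Q$ with $\operatorname{supp}\mu_Q\subset K_Q$. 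Since $\mu_Q$ is characterized by the moments $M_m(Q_k)$, and these depend only on $Q_k$ by Theorem~\ref{th:1}, the limit is independent of the lower coefficients of $\dq$.

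\emph{Moments.} Now suppose the line is $\mathbb R$. Translate by the real number $\xi_0$ so that $\xi_0=0$; this is allowed by the affine covariance in Theorem~\ref{th:1}, and it keeps the zeros real. Both the mean and the variance transform correctly under this translation. In the normalized coordinate, Vieta's formulas give
\[
a_k=-p_1,\qquad a_{k-1}=e_2=\tfrac12(p_1^2-p_2),
\]
where $p_1=\sum_j\xi_j$ and $p_2=\sum_j\xi_j^2$. Write $\Theta=\tau^k$.

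\textbf{First moment.} For $m=1$ the constant term of the Laurent polynomial is $-\Theta a_k$. Using $\int_0^1\tau^k\,d\tau=1/(k+1)$ we get
\[
M_1=\frac{-a_k}{k+1}=\frac{p_1}{k+1}=\bar\xi .
\]

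\textbf{Second moment.} For $m=2$, only two kinds of products contribute to the constant term.
\begin{itemize}
\item The square $\Theta^2a_k^2$ of the constant part.
\item The cross term $2\,(1-\Theta)w^{-1}\cdot(-\Theta a_{k-1}w)=-2\Theta(1-\Theta)a_{k-1}$.
\end{itemize}
The needed integrals are
\[
\int_0^1\tau^{2k}\,d\tau=\frac1{2k+1},\qquad
\int_0^1\tau^k(1-\tau^k)\,d\tau=\frac{k}{(k+1)(2k+1)} .
\]
This gives
\[
M_2=\frac{a_k^2}{2k+1}-\frac{2k\,e_2}{(k+1)(2k+1)}
=\frac{p_1^2+k\,p_2}{(k+1)(2k+1)} .
\]
Subtracting $M_1^2=p_1^2/(k+1)^2$ yields
\[
M_2-M_1^2=\frac{k\bigl((k+1)p_2-p_1^2\bigr)}{(k+1)^2(2k+1)}
=\frac{k}{2k+1}\cdot\frac1{k+1}\Bigl(p_2-\frac{p_1^2}{k+1}\Bigr).
\]
The last factor equals $\frac1{k+1}\sum_j(\xi_j-\bar\xi)^2$, which is exactly the claimed variance. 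Since $\mu_Q$ is supported on the compact set $K_Q$, its moments $\int x^m\,d\mu_Q$ coincide with the limits $M_m$, so these computations give the stated values.

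\emph{Main obstacle.} The analytic content sits entirely in Theorems~\ref{th:loc} and~\ref{th:determinacy}, so what remains is bookkeeping. One point needs care: when the line is not $\mathbb R$, or when $\xi_0$ is moved to the origin, the chosen affine normalization must keep $Q_k$ monic. I would handle this with the affine covariance of Theorem~\ref{th:1}, noting that the spectral polynomial transforms by the same affine map as the variable. The degenerate point case must also be admitted in Theorem~\ref{th:determinacy}. There the limit is simply the Dirac mass at $\bar\xi$, consistent with the variance formula, which gives $0$.
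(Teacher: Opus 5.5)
Your proof is correct and follows essentially the same route as the paper. Convergence comes from Theorem~\ref{th:loc} combined with Theorem~\ref{th:determinacy} for $K=K_Q$, and the mean and variance come from the first two constant-term moments. The only differences are cosmetic: you recompute the content of Corollary~\ref{cor:first-moments} directly from the symbol, and you make the translation placing $\xi_0$ at the origin explicit, whereas the paper leaves this implicit in its appeal to affine covariance.
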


Theorem~\ref{th:collinear-convergence} proves the convergence part of the main
spectral-measure conjecture whenever the convex hull of the roots of $Q_k$ has
empty interior.  In the genuinely two-dimensional case holomorphic moments do not
usually determine a compactly supported positive measure, and this is where the
mother-body phenomenon enters.

\begin{definition}[Mother body]\label{def:mother-body}
Let $\rho$ be a compactly supported positive measure in the plane.  A positive
measure $\sigma$ is called an \emph{exterior mother body} for $\rho$ relative to
the unbounded component $\Omega_\infty$ of
$\mathbb C\setminus\operatorname{supp}\rho$ if
\begin{itemize}
\item $\operatorname{supp}\sigma$ is at most one-dimensional;
\item $U^\sigma(z)=U^\rho(z)$ for $z\in\Omega_\infty$, where
\[
        U^\rho(z)=\int\log|z-\zeta|\,d\rho(\zeta).
\]
\end{itemize}
Equivalently, $\rho$ and $\sigma$ have the same exterior Cauchy transform in
$\Omega_\infty$; see \cite{ShBook}.
\end{definition}

The finite-band construction also produces an averaged family of frozen
Toeplitz measures.  In the normalization used above, the relevant Laurent symbol
is
\begin{equation}\label{eq:symbol-intro}
        b_\tau(w)=(1-\tau^k)w^{-1}
        -\tau^k(a_k+a_{k-1}w+\cdots+a_1w^{k-1}).
\end{equation}
For fixed $\tau$, the Beraha--Kahane--Weiss/Schmidt--Spitzer theorem describes
the limiting zero distribution of the constant-coefficient recurrence with symbol
$b_\tau$; see \cite{BKW,SS,BBS2}.  Averaging these frozen measures over
$0\le\tau\le1$ gives a measure whose exterior potential is the one appearing in
Theorem~\ref{th:2}.  For complex $Q_k$ this averaged finite-band measure typically has 
 two-dimensional support, while numerical computations of $Sp_n$ show a
thin tree-like carrier.  The expected limiting measure is therefore not the
averaged measure itself, but a mother body for its exterior potential.

\medskip
This leads to the following conjecture strengthening the above unconditional
results.

\begin{conjecture}[Spectral tree / mother-body conjecture]\label{conj:main}
For a high-order Heun operator \eqref{eq:1} with simple roots of $Q_k$, the
measures $\mu_n$ converge weakly to a probability measure $\mu_Q$ depending only
on $Q_k$.  Its support is a finite planar tree contained in $K_Q$, and the leaves
of the tree are exactly the roots of $Q_k$.  Moreover $\mu_Q$ is a positive
mother body for the exterior potential $U_Q^{\rm ext}$ of Theorem~\ref{th:2}.
\end{conjecture}

\begin{figure}[H]
\begin{center}
\IfFileExists{compl.pdf}{\includegraphics[width=0.45\linewidth]{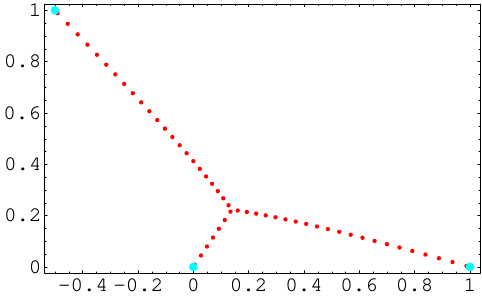}}{\fbox{\parbox{0.75\linewidth}{Figure file \texttt{compl.pdf} not found.}}}
\end{center}
\caption{The roots of the spectral polynomial $Sp_{50}(t)$ for the operator
$\dq=Q(z)d^2/dz^2$ with $Q(z)=z(z-1)(z+\frac12-i)$.}
\label{fig2}
\end{figure}

\begin{figure}[H]
\begin{center}
\IfFileExists{Q5d4_bright_manual.pdf}{\includegraphics[width=0.65\linewidth]{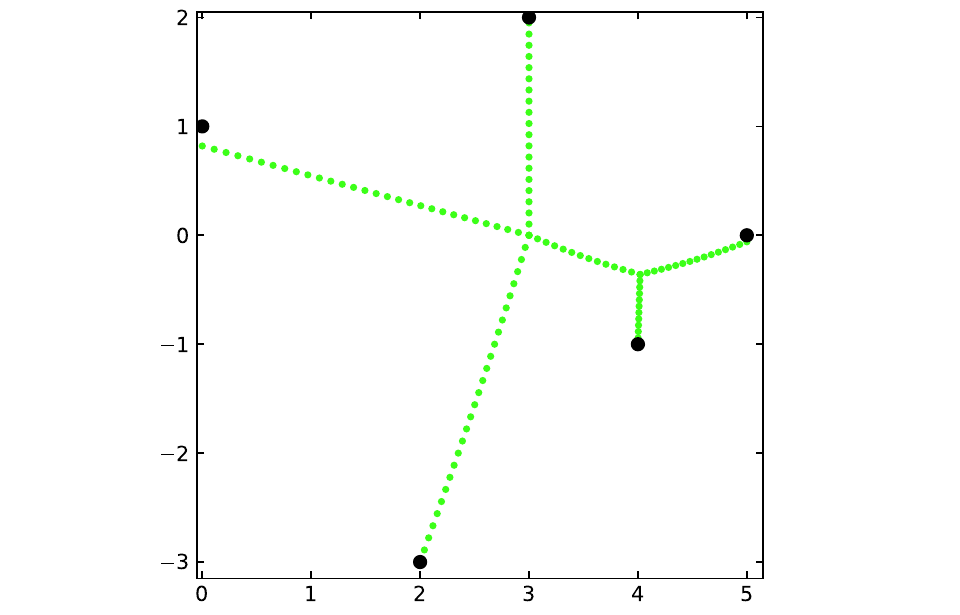}}{\fbox{\parbox{0.75\linewidth}{Figure file \texttt{Q5d4\_bright\_manual.pdf} not found.}}}
\end{center}
\caption{The roots of the spectral polynomial $Sp_{150}(z)$ for the operator
$\dq=Q(z)d^4/dz^4$ with
$Q(z)=(z-5)(z-4+i)(z-2+3i)(z-i)(z-3-2i)$.}
\label{fig1}
\end{figure}

Figures~\ref{fig2}--\ref{fig:nu-k3} illustrate this distinction between the
proved exterior potential and the conjectural one-dimensional carrier.

\medskip
\noindent\emph{Explanation for Figures~\ref{fig2}--\ref{fig1}.}  Larger dots
are the roots of $Q(z)$ and smaller dots are roots of the corresponding spectral
polynomial.

Finally, we prove that the exterior Cauchy transform is holonomic.  More
precisely, creative telescoping applied to the algebraic integrand in
\eqref{eq:Cauchy-nu} gives a linear differential equation with polynomial
coefficients for $C_Q$, up to explicitly computable endpoint terms.  In parallel,
the WKB periods satisfy a simple Picard--Fuchs equation; in the first
non-classical case $k=3$ this equation becomes
\[
        81Q(t)I'''(t)+162Q'(t)I''(t)+90Q''(t)I'(t)+10Q'''(t)I(t)=0.
\]
This WKB equation is rigorous as a period equation, but its identification with
the spectral tree is left as a conjectural program.

The paper is organized as follows.  Section~\ref{sec:finite-band-kva} proves the
finite-band trace principle.  Section~\ref{Proofs} gives the determinant
representation and proves Theorems~\ref{th:1}, \ref{th:2},
\ref{th:determinacy}, and \ref{th:collinear-convergence}.  Section~\ref{sec:pf-k3}
proves holonomicity and derives the Picard--Fuchs equations.  Section~\ref{sec:mother-body}
explains the mother-body interpretation, and Section~\ref{sec:wkb} formulates the
WKB route to the conjectural spectral tree.  The final section lists open
problems.

In the first non-classical case $k=3$, the quartic leading coefficient has four
zeros.  For
\[
        Q(z)=z(z-1)(z-i)(z-1-i)
\]
one obtains four natural averaged finite-band measures according to which zero is
chosen as the distinguished origin.  Figure~\ref{fig:nu-k3} compares these
auxiliary two-dimensional supports with the high-precision roots of $Sp_{40}$ for
$Q(z)d^3/dz^3$.

\section {A finite-band Kuijlaars--Van Assche principle}
\label{sec:finite-band-kva}

We shall use only the exterior-potential part of the Kuijlaars--Van Assche method.  This distinction is important in the present complex setting.  In the real Jacobi case Theorem~1.4 of \cite{KvA} gives weak convergence of zero-counting measures.  For complex recurrence coefficients one should not expect such a statement without additional information on the support.  Appendix~II of \cite{ShTquartic} gives the appropriate complex version in the tridiagonal case: one obtains the limiting logarithmic potential near infinity, and if a weak limit exists then it is equipotential with the averaged frozen measure outside the union of the supports.  Similar statements can be found in \cite{KMS}. The finite-band form needed below is the following elementary variant.

\begin{figure}[H]
\begin{center}
\IfFileExists{nu_k3_four_measures_corrected.pdf}{\includegraphics[width=0.96\linewidth]{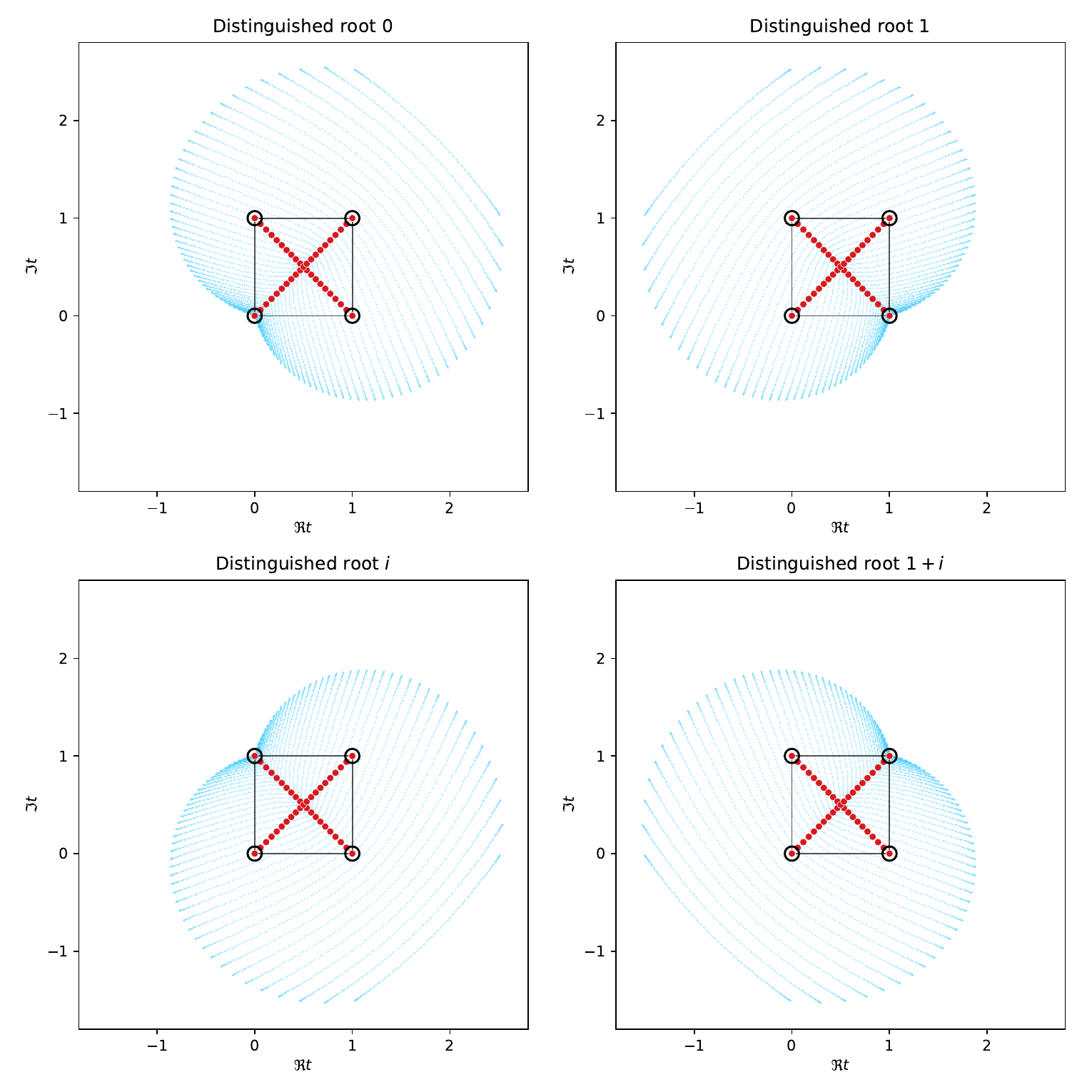}}{\fbox{\parbox{0.85\linewidth}{Figure file \texttt{nu\_k3\_four\_measures\_corrected.pdf} not found.}}}
\end{center}
\caption{Numerical approximations of the four auxiliary averaged measures for
$Q(z)=z(z-1)(z-i)(z-1-i)$.  The bright blue sets are sampled from the
frozen equimodular condition, the red points are high-precision roots of
$Sp_{40}$ for $Q(z)d^3/dz^3$, and the black circles are the zeros of $Q$.}
\label{fig:nu-k3}
\end{figure}

Let $p,q$ be fixed non-negative integers and let $B_N$ be a sequence of $(N+1)\times (N+1)$ matrices, indexed by $0,1,\ldots,N$, with uniformly bounded entries and with fixed bandwidth,
\[
        (B_N)_{ij}=0\qquad \text{if } i-j\notin[-q,p].
\]
Assume that for every diagonal $\ell\in[-q,p]$ there exists a continuous function $b_\ell:[0,1]\to\mathbb C$ such that
\[
        (B_N)_{j,j+\ell}\longrightarrow b_\ell(\tau)
        \qquad \text{whenever } j/N\to\tau,
\]
uniformly with respect to admissible indices away from the boundary.  Put
\begin{equation}\label{eq:general-symbol}
        b_\tau(w)=\sum_{\ell=-q}^{p} b_\ell(\tau)w^\ell .
\end{equation}
For a Laurent polynomial $L(w)$, denote by $\CT_w L$ its constant term.

\begin{theorem}[Finite-band theorem]
\label{th:mainrec}
Under the assumptions above, for every fixed $m\ge 1$ one has
\begin{equation}\label{eq:general-moments}
        \lim_{N\to\infty}\frac{1}{N+1}\operatorname{tr} B_N^m
        =\int_0^1 \CT_w\, b_\tau(w)^m\,d\tau .
\end{equation}
Consequently, if
\[
        P_N(t)=\det(tI-B_N),
\]
then for $|t|$ sufficiently large,
\begin{equation}\label{eq:general-logdet}
\lim_{N\to\infty}\frac{1}{N+1}\log P_N(t)
=
\log t-
\sum_{m\ge 1}\frac{1}{m t^m}
\int_0^1 \CT_w\, b_\tau(w)^m\,d\tau,
\end{equation}
where the principal branch is chosen near infinity.  The convergence is locally uniform in a neighbourhood of infinity.  Equivalently,
\begin{equation}\label{eq:general-cauchy}
\lim_{N\to\infty}\frac{1}{N+1}\frac{P_N'(t)}{P_N(t)}
=
\frac1t+
\sum_{m\ge 1}\frac{1}{t^{m+1}}
\int_0^1 \CT_w\, b_\tau(w)^m\,d\tau .
\end{equation}
In particular, every subsequential weak limit of the zero-counting measures of $P_N$ has the same logarithmic potential near infinity, and this exterior potential is determined only by the limiting symbol \eqref{eq:general-symbol}.
\end{theorem}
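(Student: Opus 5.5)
The plan is to expand the trace as a sum over closed lattice paths and then compare that sum with the frozen Toeplitz symbol. For fixed $m$,
\[
\operatorname{tr}B_N^m=\sum_{j=0}^N\ \sum_{\ell_1+\cdots+\ell_m=0}\ \prod_{s=1}^m (B_N)_{j_{s-1},\,j_{s-1}+\ell_s},
\]
where $j_0=j$, $j_s=j_{s-1}+\ell_s$, and each step satisfies $\ell_s\in[-q,p]$. Only the condition $\ell_1+\cdots+\ell_m=0$ is needed, because a closed path must return to its starting index. The number of step sequences is at most $(p+q+1)^m$, which is independent of $N$. Every intermediate index satisfies $|j_s-j|\le m\max(p,q)$.

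For $j$ with $j/N$ close to $\tau$ and $j$ at distance more than $m\max(p,q)$ from the boundary, the path stays within $O(m)$ of $j$. The uniform convergence hypothesis then gives
\[
\prod_s (B_N)_{j_{s-1},j_{s}}=\prod_s b_{\ell_s}(j/N)+o(1),
\]
uniformly in $j$ and in the finitely many paths. Two facts justify this. First, $j_s/N-j/N=O(m/N)$. Second, $b_\ell$ is continuous on $[0,1]$ and hence uniformly continuous. Summing over admissible step sequences gives exactly $\CT_w\,b_{j/N}(w)^m+o(1)$. The $O(m)$ indices near the boundary, together with any exceptional set permitted by the phrase ``away from the boundary'', contribute at most $C^m(p+q+1)^m\cdot O(1)$ in total, since the entries are uniformly bounded. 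After dividing by $N+1$ this is $o(1)$. Finally, $\frac1{N+1}\sum_j \CT_w b_{j/N}(w)^m$ is a Riemann sum for the continuous function $\tau\mapsto \CT_w b_\tau^m$, and it converges to the integral. This proves \eqref{eq:general-moments}.

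For \eqref{eq:general-logdet} I would use
\[
\log\det(tI-B_N)=(N+1)\log t+\operatorname{tr}\log(I-B_N/t)=(N+1)\log t-\sum_{m\ge1}\frac{\operatorname{tr}B_N^m}{m t^m}.
\]
This is valid for $|t|>R:=\sup_N\|B_N\|$, which is finite: by Schur's test, $\|B_N\|\le (p+q+1)\sup|(B_N)_{ij}|$. All eigenvalues then lie in $|z|\le R$, so the principal branch near infinity is well defined. The bound $|\operatorname{tr}B_N^m|/(N+1)\le R^m$ holds uniformly in $N$. For $|t|\ge R'>R$ the series is therefore dominated by $\sum (R/R')^m/m$, and dominated convergence allows the termwise limit. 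It also gives locally uniform convergence on $|t|>R'$. The Cauchy-transform version \eqref{eq:general-cauchy} follows in either of two ways: by differentiating the log-determinant, since locally uniform convergence of holomorphic functions passes to derivatives, or directly from $\frac1{N+1}\operatorname{tr}(tI-B_N)^{-1}=\sum_m t^{-m-1}\frac{\operatorname{tr}B_N^m}{N+1}$ with the same domination.

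For the last claim, all zero-counting measures are supported in $|z|\le R$. Consider a subsequential weak limit $\mu$. On $|t|>R$ its logarithmic potential is the limit of $\frac1{N+1}\log|P_N(t)|$, because $\log|t-z|$ is continuous in $z$ there. That limit is the real part of the right-hand side of \eqref{eq:general-logdet}, which depends only on the symbol. Harmonic continuation extends this to the unbounded component if one wants more.

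The main obstacle is the bookkeeping in the first step. I have to make precise the uniformity of the convergence $(B_N)_{j,j+\ell}\to b_\ell(\tau)$ away from the boundary, and check that the index shifts $j_s$ along a path still fall in the admissible region. I would handle this by fixing $\delta>0$ and treating $j\in[\delta N,(1-\delta)N]$ as the main range. The indices outside this range contribute $O(\delta)$ to the normalized trace. The proof then concludes by letting $N\to\infty$ first and then $\delta\to0$.
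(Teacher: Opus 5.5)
Your proposal is correct and follows essentially the same route as the paper: you expand $\operatorname{tr}B_N^m$ over closed band paths, discard the boundary rows, identify the interior contribution with $\CT_w\, b_\tau(w)^m$ and pass to a Riemann sum, and then use the series for $\operatorname{tr}\log(I-t^{-1}B_N)$ when $|t|>\sup_N\|B_N\|$ and differentiate. Your added details fill in steps the paper leaves implicit: the Schur-test bound on $\|B_N\|$, the $\delta$-truncation handling the uniformity hypothesis, dominated convergence for the series, and the explicit weak-limit argument for the final claim.
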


\begin{proof}
The trace of $B_N^m$ is a sum over closed paths of length $m$ in the directed graph of the band matrix.  Paths meeting the first or last $O(m)$ rows contribute only $O(1)$ to the trace, hence disappear after division by $N+1$.  For an interior starting index $j$ with $j/N\to\tau$, the contribution of closed paths is exactly the constant term of the $m$-th power of the local Laurent symbol, up to an error tending to zero:
\[
        \sum_{\ell_1+\cdots+\ell_m=0}
        b_{\ell_1}(\tau)b_{\ell_2}(\tau)\cdots b_{\ell_m}(\tau)
        =\CT_w\, b_\tau(w)^m .
\]
Summing over $j$ gives a Riemann sum and proves \eqref{eq:general-moments}.

Choose $R$ so large that $\|B_N\|\le R$ for all $N$.  For $|t|>R$,
\[
        \log\det(tI-B_N)
        =(N+1)\log t+
        \operatorname{tr}\log(I-t^{-1}B_N)
        =(N+1)\log t-
        \sum_{m\ge 1}\frac{1}{m t^m}\operatorname{tr}B_N^m .
\]
The series is normally convergent on compact subsets of $|t|>R$, so the limit may be passed through the sum.  This gives \eqref{eq:general-logdet}, and differentiation gives \eqref{eq:general-cauchy}.
\end{proof}

\begin{remark}
For a fixed value of $\tau$, the symbol $b_\tau$ is the symbol of a finite-band Toeplitz matrix.  The corresponding limiting zero distribution is described by the usual Beraha--Kahane--Weiss/Schmidt--Spitzer equimodular condition.  The theorem above says that a slowly varying band matrix has, near infinity, the averaged exterior potential of these frozen Toeplitz limits.  In the real tridiagonal situation this is precisely the content of \cite{KvA}; in the complex tridiagonal situation one should use the more cautious potential-theoretic formulation of Appendix~II in \cite{ShTquartic}.  This is the point needed later: the averaged finite-band measure determines an exterior potential, but the actual weak limit may be a different measure with the same exterior potential.
\end{remark}

\section{Proofs} \label{Proofs}

We first explain the finite-band determinant representation.  After an affine change of variable we may assume that $Q_k$ is monic and has a zero at the origin,
\[
Q_k(z)=z^{k+1}+a_kz^k+a_{k-1}z^{k-1}+\cdots+a_1z .
\]
Write
\[
        Q_i(z)=\sum_{j=0}^{i+1}q_{i,j}z^j,
\]
and set
\[
        \lambda_n=\sum_{i=1}^k q_{i,i+1}\,n(n-1)\cdots(n-i+1).
\]

\begin{lemma}[No spurious lower-degree kernel vectors]
\label{lem:no-spurious}
Let
\[
        \lambda(s)=\sum_{i=1}^{k}q_{i,i+1}\,s(s-1)\cdots(s-i+1),
\]
so that $\lambda_n=\lambda(n)$.  There exists $n_0$ such that
\[
        \lambda(n)\ne\lambda(s),\qquad 0\le s<n,\quad n\ge n_0 .
\]
Consequently, for $n\ge n_0$, every non-zero vector in the kernel of
\[
        \dq-\lambda_n(z-t):\operatorname{Pol}_{\le n}\longrightarrow
        \operatorname{Pol}_{\le n}
\]
has exact degree $n$.  Hence
\[
        Sp_n(t)=\det(tI-Z+\lambda_n^{-1}\dq)
\]
with no extra factors coming from polynomial solutions of lower degree.
\end{lemma}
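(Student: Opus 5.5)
Three steps: an asymptotic comparison of $\lambda(n)$ and $\lambda(s)$, a triangularity argument for the kernel, and the determinant identity.

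\emph{Step 1: growth of $\lambda$.} The polynomial $\lambda(s)$ has degree exactly $k$ in $s$, with leading coefficient $q_{k,k+1}=1$, since $Q_k$ is monic of degree $k+1$. Write
\[
\lambda(n)-\lambda(s)=(n-s)\,R(n,s),
\]
where $R$ is a polynomial. Its top-degree part is $h(n,s)=\sum_{a+b=k-1}n^as^b$, the complete homogeneous polynomial of degree $k-1$. For $0\le s\le n$ we have $h(n,s)\ge n^{k-1}$, because every term is non-negative and the term $n^{k-1}$ appears. The lower-order part of $R$ is bounded by $C(n+1)^{k-2}$ uniformly for $0\le s\le n$. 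Hence $R(n,s)\ge n^{k-1}-Cn^{k-2}>0$ for $n\ge n_0$.

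This is the only non-trivial step. The care needed is that positivity of the leading form holds uniformly in the ratio $s/n\in[0,1]$, which is why the comparison uses $n^{k-1}$ rather than a crude bound. For $k=1$ the conclusion is immediate, since $\lambda(n)-\lambda(s)=n-s$.

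\emph{Step 2: triangularity and the kernel.} By Fuchs index one, $\dq$ maps $z^s$ to $\lambda(s)z^{s+1}+(\text{lower})$. Hence
\[
(\dq-\lambda_n(z-t))z^s=(\lambda(s)-\lambda_n)z^{s+1}+O(z^s).
\]
Suppose $P\neq0$ is in the kernel and has degree $s<n$. Then the coefficient of $z^{s+1}$ in $(\dq-\lambda_n(z-t))P$ is $(\lambda(s)-\lambda_n)\cdot\operatorname{lc}(P)$, which is non-zero by Step 1. This is a contradiction, so every kernel vector has degree exactly $n$.

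Note that $\dq-\lambda_n(z-t)$ does map $\operatorname{Pol}_{\le n}$ to itself: the $z^{n+1}$ coefficient of the image of $z^n$ is $\lambda(n)-\lambda_n=0$. For degree-$n$ kernel vectors, $V=-\lambda_n(z-t)$ is a Van Vleck polynomial. Conversely, every Van Vleck polynomial for degree $n$ has leading coefficient $-\lambda_n$, because the top coefficient must cancel.

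\emph{Step 3: the determinant.} In the monomial basis $1,z,\dots,z^n$, truncate multiplication by $z$ to $\operatorname{Pol}_{\le n}$; call the resulting matrix $Z$. Since $\lambda_n\neq0$ for large $n$, the operator $\lambda_n^{-1}(\dq-\lambda_n(z-t))$ is represented by $tI-Z+\lambda_n^{-1}\dq$. The truncation is harmless, because the $z^{n+1}$ components cancel exactly.

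The determinant of this matrix is a monic polynomial of degree $n+1$ in $t$. It vanishes precisely at those $t$ for which a kernel vector exists, and by Step 2 these are exactly the Van Vleck zeros $z_{n,j}$. Multiplicities are the natural multiplicities, which the paper defines through this determinant in \cite{Sh,BoSh}. Therefore
\[
Sp_n(t)=\det(tI-Z+\lambda_n^{-1}\dq),
\]
with no extra factors coming from lower-degree solutions.
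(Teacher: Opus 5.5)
Your proof is correct. Steps 2 and 3 coincide with the paper's argument: a kernel vector of degree $s<n$ produces the non-zero $z^{s+1}$-coefficient $(\lambda(s)-\lambda_n)c_s$, and $Sp_n$ is then identified with the monic characteristic polynomial of $Z-\lambda_n^{-1}\dq$.

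The difference is in the separation step $\lambda(n)\ne\lambda(s)$. The paper argues by contradiction. It takes a sequence with $\lambda(n_\nu)=\lambda(s_\nu)$ and passes to a subsequence with $s_\nu/n_\nu\to\rho$. The case $\rho<1$ is ruled out by dividing by $n_\nu^k$. The case $\rho=1$ is ruled out by expanding $\lambda(n_\nu)-\lambda(n_\nu-h_\nu)$ in the gap $h_\nu=n_\nu-s_\nu$.

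You instead factor $\lambda(n)-\lambda(s)=(n-s)R(n,s)$ and bound the leading form $\sum_{a+b=k-1}n^as^b$ below by $n^{k-1}$, uniformly on $0\le s\le n$. This treats both regimes (ratio away from $1$, and small gap) at once. It also yields an explicit $n_0$ in terms of the coefficients of $\lambda$, whereas the compactness argument only gives existence.

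One small correction: the coefficients $q_{i,i+1}$ with $i<k$ are in general complex, so $R(n,s)$ is complex. The bound should therefore read $|R(n,s)|\ge n^{k-1}-Cn^{k-2}$, via the reverse triangle inequality. Also, the monic normalization $q_{k,k+1}=1$ is convenient but not needed: the lower bound $|q_{k,k+1}|\,n^{k-1}$ works equally well.
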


\begin{proof}
The polynomial $\lambda(s)$ has degree $k$ and leading coefficient
$q_{k,k+1}\ne0$.  Suppose that $\lambda(n_\nu)=\lambda(s_\nu)$ for a sequence
$n_\nu\to\infty$ and integers $0\le s_\nu<n_\nu$.  Passing to a subsequence,
write $s_\nu/n_\nu\to\rho\in[0,1]$.  If $\rho<1$, division by $n_\nu^k$ gives
$q_{k,k+1}(1-\rho^k)=0$, a contradiction.  Hence $\rho=1$.  Put
$h_\nu=n_\nu-s_\nu$.  Then $h_\nu\ge1$ and $h_\nu=o(n_\nu)$.  Expanding
$\lambda(n_\nu)-\lambda(n_\nu-h_\nu)$ gives
\[
        \lambda(n_\nu)-\lambda(n_\nu-h_\nu)
        =kq_{k,k+1}h_\nu n_\nu^{k-1}
        +O(h_\nu^2 n_\nu^{k-2})+O(h_\nu n_\nu^{k-2}).
\]
After division by $h_\nu n_\nu^{k-1}$ this tends to $kq_{k,k+1}\ne0$, again a
contradiction.  Thus the numbers $\lambda(n)$ are eventually separated from all
previous values $\lambda(s)$.

Now let $S$ be a non-zero kernel vector of degree $s\le n$ with leading
coefficient $c_s$.  If $s<n$, then the coefficient of $z^{s+1}$ in
$(\dq-\lambda_n z+\lambda_n t)S$ equals $(\lambda(s)-\lambda(n))c_s$, which is
non-zero by the first part.  Therefore $s=n$.  The determinant therefore records
exactly the degree-$n$ Van Vleck parameters, counted with their natural algebraic
multiplicities.
\end{proof}
Since $Q_k$ is monic, $\lambda_n=n(n-1)\cdots(n-k+1)+O(n^{k-1})$ and is non-zero for all sufficiently large $n$.  For a Stieltjes polynomial of degree $n$, the coefficient of $z^{n+1}$ in
\[
\left(\dq+V\right)S
\]
must vanish. Hence all admissible linear Van Vleck polynomials have the same leading coefficient. With the sign convention used below we write
\[
V_n(z)=-\lambda_n(z-t).
\]
The equation
\[
(\dq-\lambda_n(z-t))S=0,
\qquad \deg S\le n,
\]
is equivalent to degeneracy of the induced operator on $Pol_{\le n}$. Dividing by $\lambda_n$, this induced operator is represented in the monomial basis by a finite-band matrix
\[
M_n(t)=tI-Z+\lambda_n^{-1}\dq,
\]
where $Z$ is the matrix of multiplication by $z$ truncated to $Pol_{\le n}$. Therefore
\[
Sp_n(t)=\det M_n(t)
\]
exactly, since both sides are monic polynomials in $t$ with the same zeros
counted with the same multiplicities.

By construction $M_n(t)$ has exactly $k+1$ possibly non-zero diagonals: the first subdiagonal coming from $-Z$, the main diagonal containing $t$, and $k-1$ upper diagonals coming from $Q_k(z)d^k/dz^k$ and the lower order terms. Let
\[
Sp_{n,j}(t)=\det M_{n,j}(t),\qquad 1\le j\le n+1,
\]
where $M_{n,j}$ is the leading $j\times j$ principal minor of $M_n(t)$, and set $Sp_{n,0}=1$. Since $M_n$ is banded, the sequence $Sp_{n,j}$ satisfies a finite recurrence of order $k$ in $j$.

The lower-order summands of $\dq$ do not contribute to the limiting local symbol, and hence do not contribute to the limiting recurrence for principal minors. Indeed, the contribution of
\[
Q_i(z)\frac{d^i}{dz^i},\qquad i<k,
\]
after division by $\lambda_n$, is $O(n^{i-k})$ uniformly on the relevant diagonals. Hence these terms vanish in the local limit $j/n\to\tau$.

\begin{theorem}\label{th:rec}
If
\[
Q_k(z)=z^{k+1}+a_kz^k+a_{k-1}z^{k-1}+\cdots+a_1z,
\]
then, for $j/n\to\tau$, the limiting recurrence for the principal minors has characteristic equation
\begin{equation}\label{eq:rel}
\Psi^k=(z+a_k\Theta)\Psi^{k-1}
+\Theta(1-\Theta)a_{k-1}\Psi^{k-2}
+\Theta(1-\Theta)^2a_{k-2}\Psi^{k-3}
+\cdots+
\Theta(1-\Theta)^{k-1}a_1,
\end{equation}
where $\Theta=\tau^k$.
\end{theorem}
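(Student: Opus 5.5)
Our plan is to compute the matrix $M_n(t)=tI-Z+\lambda_n^{-1}\dq$ entrywise in the monomial basis $1,z,\dots,z^n$, pass to the local limit $j/n\to\tau$, and read off the characteristic equation of the resulting constant-coefficient recurrence for the principal minors. By the remark preceding the statement, only $Q_k(z)\,d^k/dz^k$ survives after division by $\lambda_n$. So we compute
\[
Q_k(z)\frac{d^k}{dz^k}z^{j}=j(j-1)\cdots(j-k+1)\sum_{r=1}^{k+1}a_r z^{j-k+r-1},\qquad a_{k+1}=1 .
\]
Dividing by $\lambda_n\sim n^k$ and letting $j/n\to\tau$, the coefficient of $z^{j-\ell}$ (for $\ell=k-r+1\in\{0,\dots,k-1\}$) tends to $\tau^k a_{k+1-\ell}=\Theta\,a_{k+1-\ell}$. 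The shift by $\ell$ is negligible, since $(j-\ell)/n\to\tau$ as well. Combined with $-Z$, which sends $z^j\mapsto -z^{j+1}$, and with $t$ on the diagonal, the column of $M_n$ indexed by $z^j$ has the following limiting entries: $t+\Theta\cdot 1-\Theta$ on the diagonal, corrected as below; $-1$ on the subdiagonal; and $\Theta a_{k+1-\ell}$ on the $\ell$-th superdiagonal. For the diagonal, the $r=k+1$ term gives $\Theta$, which must be combined with the $-Z$ contribution. Here we must be careful with conventions: the main diagonal should be $t+a_k\Theta$ after the $z^{j+1}$ terms are matched, because the degree-$(k+1)$ part of $Q_k$ contributes to $z^{j+1}$ and cancels against $-\lambda_n z$ exactly (the definition of $\lambda_n$). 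This gives the subdiagonal limit $-1+\Theta\cdot(\text{leading term})\to -(1-\Theta)$, by the same computation that produced the symbol $b_\tau$ in \eqref{eq:symbol-intro}.

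So the limiting band matrix has subdiagonal $-(1-\Theta)$, diagonal $t+a_k\Theta$, and $\ell$-th superdiagonal $\Theta a_{k-\ell+1}$ for $\ell=1,\dots,k-1$. This matches $t-b_\tau$ after transposition and the rescaling $w\mapsto w^{-1}$. Next we expand the leading $(j+1)\times(j+1)$ minor along its last row or column in the standard way for Hessenberg matrices. A Hessenberg matrix with constant subdiagonal $c=-(1-\Theta)$ satisfies
\[
D_{j}=(t+a_k\Theta)D_{j-1}+\sum_{\ell=1}^{k-1}(-c)^{\ell}\,\Theta a_{k-\ell+1}\,D_{j-1-\ell}\cdot(\pm1),
\]
where the products of $\ell$ subdiagonal entries account for the factor $(1-\Theta)^{\ell}$. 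We fix the signs by checking the tridiagonal case $k=2$ directly. Substituting $D_j=\Psi^j$ and multiplying through gives exactly \eqref{eq:rel}, namely
\[
\Psi^k=(z+a_k\Theta)\Psi^{k-1}+\Theta(1-\Theta)a_{k-1}\Psi^{k-2}+\dots+\Theta(1-\Theta)^{k-1}a_1 .
\]

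The main obstacle is the bookkeeping of signs and of which index carries the factor $(1-\Theta)$. We intend to settle this by performing the Hessenberg expansion on the frozen matrix, using the fact that a Hessenberg determinant expansion picks up the product of the intervening subdiagonal entries. The precise statement, "limiting recurrence", is interpreted as the entrywise limit of the coefficients of the exact order-$k$ recurrence. Justifying that limit is routine: the exact recurrence coefficients are polynomials in finitely many entries of $M_n$ near row $j$, and each such entry converges uniformly.
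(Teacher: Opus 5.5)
You have a genuine gap, though it is easily repaired. Your route is legitimate and differs from the paper's: you freeze the entries of $M_n(t)$, expand the leading principal minors of the upper Hessenberg limit along the last column, and substitute $D_j=\Psi^j$. The paper instead reads off the symbol $b_\tau(w)=(1-\Theta)w^{-1}-\Theta(a_k+a_{k-1}w+\cdots+a_1w^{k-1})$, takes $t=b_\tau(w)$ as the characteristic equation, and substitutes $\Psi=(1-\Theta)/w$. Your version has the merit of exhibiting the order-$k$ recurrence explicitly. However, the computation, which is all the statement asserts, goes wrong as written.

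Your first display is off by one. The correct identity is $Q_k(z)\frac{d^k}{dz^k}z^j=j(j-1)\cdots(j-k+1)\sum_{r=1}^{k+1}a_r z^{j-k+r}$, not $z^{j-k+r-1}$. You later repair this for the top term ($r=k+1$ goes to $z^{j+1}$, giving the subdiagonal $-(1-\Theta)$) and for the diagonal ($t+a_k\Theta$). You do not repair it for the superdiagonals, where you keep $\Theta a_{k-\ell+1}$. With the correct exponent, $a_r$ lands at shift $r-k$, so the $\ell$-th superdiagonal is $\Theta a_{k-\ell}$ for $\ell=1,\dots,k-1$. Your matrix as written has $a_k$ both on the diagonal and on the first superdiagonal, and contains no $a_1$ at all. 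Even with all signs $+$, your recurrence then gives $\Psi^k=(t+a_k\Theta)\Psi^{k-1}+\Theta(1-\Theta)a_k\Psi^{k-2}+\cdots+\Theta(1-\Theta)^{k-1}a_2$. That is not \eqref{eq:rel}, so neither ``gives exactly \eqref{eq:rel}'' nor ``matches $t-b_\tau$'' follows from what precedes it.

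The signs are also left open. Checking $k=2$ fixes only the $\ell=1$ term. What you need is the general last-column expansion for upper Hessenberg matrices,
\[
D_j=\sum_{\ell\ge0}(-1)^\ell h_{j-\ell,j}\Big(\prod_{m=j-\ell}^{j-1}h_{m+1,m}\Big)D_{j-\ell-1}.
\]
With $h_{m+1,m}=-(1-\Theta)$ this produces $(-1)^\ell\bigl(-(1-\Theta)\bigr)^\ell=(1-\Theta)^\ell$, so the sign is $+1$ for every $\ell$. With both corrections you get
$D_j=(t+a_k\Theta)D_{j-1}+\sum_{\ell=1}^{k-1}\Theta(1-\Theta)^\ell a_{k-\ell}D_{j-1-\ell}$, whose characteristic equation is exactly \eqref{eq:rel}.

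Separately, your remark that the top-degree term ``cancels against $-\lambda_n z$ exactly'' is false for interior $j$. Exact cancellation happens only at $j=n$, which is what makes the operator preserve $\operatorname{Pol}_{\le n}$. In the interior the subdiagonal entry is $-1+j(j-1)\cdots(j-k+1)/\lambda_n+O(1/n)\to-(1-\Theta)$, which is the value you end up using anyway.
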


\begin{proof}
It is cleaner to read the recurrence from the local matrix symbol of
$B_n=Z-\lambda_n^{-1}\dq$.  For an interior monomial $z^j$ with
$j/n\to\tau$, the two terms mapping $z^j$ to $z^{j+1}$ are $Z$ and
$z^{k+1}d^k/dz^k$.  Their combined coefficient tends to
$1-\tau^k=1-\Theta$.  The term
$a_{k-r}z^{k+1-r}d^k/dz^k$, $r=1,\ldots,k$, maps $z^j$ to $z^{j+1-r}$
and contributes, after division by $\lambda_n$, the coefficient
$-\Theta a_{k-r}$ to $B_n$.  Therefore the frozen local Laurent symbol is
\[
        b_\tau(w)=(1-\Theta)w^{-1}
        -\Theta(a_k+a_{k-1}w+\cdots+a_1w^{k-1}).
\]
The characteristic equation for the corresponding Toeplitz recurrence is
$t=b_\tau(w)$.  Setting $\Psi=(1-\Theta)/w$ and clearing denominators gives
exactly \eqref{eq:rel}, with $z$ there playing the role of the spectral
parameter $t$.  This proves the assertion.
\end{proof}

Equivalently, the same frozen problem is encoded by the Laurent symbol
\begin{equation}\label{eq:symbol}
        b_\tau(w)=(1-\tau^k)w^{-1}-\tau^k(a_k+a_{k-1}w+\cdots+a_1w^{k-1}).
\end{equation}
Indeed, setting $\Psi=(1-\tau^k)/w$ transforms the equation $t=b_\tau(w)$ into \eqref{eq:rel}.  This elementary observation is useful because the moments of the spectral roots are constant terms of powers of $b_\tau$.

For each fixed $\tau$, define $\nu_{Q_k}(\tau)$ to be the limiting zero-counting measure of the constant-coefficient recurrence whose characteristic equation is \eqref{eq:rel}. Define
\[
\nu_\dq=\int_0^1 \nu_{Q_k}(\tau)d\tau .
\]
The notation emphasizes the fact that this averaged measure depends only on $Q_k$.

\begin{proof}[Proof of Theorem~\ref{th:1}]
Put
\[
        B_n=Z-\lambda_n^{-1}\dq,
        \qquad M_n(t)=tI-B_n .
\]
Then $Sp_n(t)=\det(tI-B_n)$, and hence
\[
        \frac{1}{n+1}\sum_{j=1}^{n+1}z_{n,j}^m
        =\frac{1}{n+1}\operatorname{tr} B_n^m .
\]
For fixed $m$, the trace of $B_n^m$ is a sum over closed paths of length $m$ in the finite band of $B_n$.  Boundary paths involving the first or last $O(m)$ rows contribute $O(1)$ to the trace and therefore disappear after division by $n+1$.

For an interior index $j$ with $j/n\to\tau$, the non-zero diagonals of $B_n$ converge to the coefficients of the Laurent symbol \eqref{eq:symbol}.  The lower order parts of $\dq$ vanish uniformly in this local limit, because the contribution of $Q_i(z)d^i/dz^i$, $i<k$, is $O(n^{i-k})$ after division by $\lambda_n$.  Thus the contribution of all closed paths based at such an index tends to the constant term
\[
        \CT_w\, b_\tau(w)^m .
\]
Summing over $j=0,1,\ldots,n$ gives a Riemann sum, and we obtain exactly \eqref{eq:moment-formula}.  The formula contains only the coefficients of $Q_k$, proving the asserted independence of the lower coefficients.
\end{proof}

\begin{corollary}[First moment consequences]\label{cor:first-moments}
In the coordinate in which
$Q_k(z)=z^{k+1}+a_kz^k+a_{k-1}z^{k-1}+\cdots+a_1z$, every subsequential weak limit $\mu$ of the root-counting measures of $Sp_n$ satisfies
\[
        \int z\,d\mu(z)=-\frac{a_k}{k+1},
\]
so the barycenter of the limiting Van Vleck roots is the arithmetic mean of the zeros of $Q_k$.  Moreover
\[
        \int z^2\,d\mu(z)=
        \frac{a_k^2}{2k+1}
        -\frac{2k}{(k+1)(2k+1)}a_{k-1},
\]
and
\[
        \int z^3\,d\mu(z)=
        -\frac{a_k^3}{3k+1}
        +\frac{6k}{(2k+1)(3k+1)}a_ka_{k-1}
        -\frac{6k^2}{(k+1)(2k+1)(3k+1)}a_{k-2}.
\]
\end{corollary}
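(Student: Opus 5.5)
The plan is to read off the first three moments directly from the constant-term formula \eqref{eq:moment-formula} of Theorem~\ref{th:1}. The passage from the limits $M_m(Q_k)$ to an arbitrary subsequential weak limit $\mu$ uses Theorem~\ref{th:loc}: all zeros eventually lie in a fixed compact neighbourhood of $K_Q$, so $z^m$ is a bounded continuous test function there, and $\int z^m d\mu=\lim \frac1{n+1}\sum_j z_{n,j}^m=M_m(Q_k)$ along the subsequence. It then remains to compute $\int_0^1\CT_w[b_\tau(w)^m]\,d\tau$ for $m=1,2,3$, writing $\Theta=\tau^k$ and
\[
b_\tau(w)=(1-\Theta)w^{-1}-\Theta a_k-\Theta a_{k-1}w-\Theta a_{k-2}w^2-\cdots .
\]

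\textbf{The constant terms.} The constant term of $b_\tau^m$ comes from products of $m$ monomials whose exponents sum to zero. The only negative exponent is $-1$, so such a product uses $j$ copies of $w^{-1}$ together with positive exponents summing to $j$.
\begin{itemize}
\item For $m=1$ only $-\Theta a_k$ survives.
\item For $m=2$ we get $\Theta^2a_k^2$, plus $2(1-\Theta)(-\Theta a_{k-1})$ from the pairing $w^{-1}\cdot w$.
\item For $m=3$ we get three kinds of terms:
  \begin{itemize}
  \item $-\Theta^3a_k^3$;
  \item the multinomial count $3!$ for choosing $w^{-1}$, $w^0$, $w^1$ once each, giving $6(1-\Theta)\Theta^2a_ka_{k-1}$;
  \item the choice $(w^{-1})^2w^2$ with multiplicity $3$, giving $-3(1-\Theta)^2\Theta a_{k-2}$.
  \end{itemize}
\end{itemize}
If $k$ is small and the relevant coefficient does not exist, for instance $a_{k-2}$ when $k=1$, we interpret it as zero, consistent with the displayed symbol.

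\textbf{The integrals.} Each term is integrated using $\int_0^1\tau^{pk}d\tau=1/(pk+1)$.
\begin{itemize}
\item For $m=1$ this gives $-a_k/(k+1)$.
\item For $m=2$ we need $\int(\Theta-\Theta^2)=\frac1{k+1}-\frac1{2k+1}=\frac{k}{(k+1)(2k+1)}$, which yields the stated second moment.
\item For $m=3$ we need $\int(\Theta^2-\Theta^3)=\frac{k}{(2k+1)(3k+1)}$ and
\[
\int\Theta(1-\Theta)^2=\frac1{k+1}-\frac2{2k+1}+\frac1{3k+1}=\frac{2k^2}{(k+1)(2k+1)(3k+1)}.
\]
Multiplying by the constants $6$ and $-3$ gives exactly the displayed third moment.
\end{itemize}

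\textbf{The barycenter.} Since the zeros of $Q_k$ sum to $-a_k$ and there are $k+1$ of them, $-a_k/(k+1)$ is their arithmetic mean. This is a coordinate-free statement, by the affine covariance asserted in Theorem~\ref{th:1}.

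The only delicate point is the justification of passing moments to weak limits, which needs the uniform compactness from Theorem~\ref{th:loc}. The rest is careful bookkeeping of the multinomial multiplicities in the $m=3$ constant term.
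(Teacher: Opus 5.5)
Your proposal is correct and follows the paper's own argument: take the constant terms of $b_\tau(w)^m$ for $m=1,2,3$ and integrate in $\tau$. Your multinomial counts and the integrals, e.g.\ $\int_0^1\Theta(1-\Theta)^2\,d\tau=2k^2/((k+1)(2k+1)(3k+1))$, are right, and your use of Theorem~\ref{th:loc} to pass from the limits $M_m(Q_k)$ to an arbitrary subsequential weak limit makes explicit a step the paper leaves implicit in the statement of Theorem~\ref{th:1}.
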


\begin{proof}
These identities are obtained by taking the constant terms of the first three powers of the symbol \eqref{eq:symbol} and integrating in $\tau$.
\end{proof}

\begin{proof}[Proof of Theorem~\ref{th:2}]
By Theorem~\ref{th:1}, the normalized Cauchy transforms converge as germs at
infinity to the Laurent series
\[
        \frac1t+\sum_{m\ge1}\frac{M_m(Q_k)}{t^{m+1}}.
\]
For the frozen recurrence with parameter $\tau$, let $\Psi(t,\tau)$ be the
branch of \eqref{eq:rel} satisfying $\Psi(t,\tau)=t+O(1)$ at infinity.  The
standard ratio-asymptotic formula for finite-term recurrences gives
\[
        \lim_{m\to\infty}\frac1m\frac{d}{dt}\log P_{m,\tau}(t)
        =\frac{\partial}{\partial t}\log\Psi(t,\tau)
\]
locally uniformly near infinity.  Averaging over $0\le\tau\le1$ and using the
symbol identity \eqref{eq:symbol} identifies the germ with
\[
        C_Q(t)=\int_0^1\frac{\partial}{\partial t}\log\Psi(t,\tau)\,d\tau .
\]
This proves the asserted formula near infinity.

It remains to extend the convergence from a neighbourhood of infinity to the
whole exterior $\Omega_Q$.  Let $L\Subset\Omega_Q$ and put
$\delta=\operatorname{dist}(L,K_Q)>0$.  Choose
$0<\varepsilon<\delta/2$.  By Theorem~\ref{th:loc}, for all sufficiently large
$n$ all zeros of $Sp_n$ lie in the $\varepsilon$-neighbourhood of $K_Q$.  Hence
for $t\in L$,
\[
        |C_n(t)|\le \frac{2}{\delta} .
\]
The family $\{C_n\}$ is therefore normal on $\Omega_Q$.  Every subsequential
locally uniform limit agrees with the already identified germ near infinity, and
hence agrees with its analytic continuation throughout the connected domain
$\Omega_Q$.  Thus the whole sequence $C_n$ converges locally uniformly on
$\Omega_Q$.

The potentials $U_n$ are harmonic on $L$ for all sufficiently large $n$ and are
locally uniformly bounded there.  Hence they form a normal family of harmonic
functions.  Near infinity their limit is fixed by the expansion
$U_n(t)=\log|t|+O(|t|^{-1})$ and by the convergence of the Cauchy transforms.
On $\Omega_Q$ the gradient is determined by the limit $C_Q$, and the
normalization at infinity fixes the additive constant.  Therefore $U_n$ converges
locally uniformly to the harmonic function $U_Q^{\rm ext}$ stated in the theorem.
\end{proof}

\begin{proof}[Proof of Theorem~\ref{th:determinacy}]
The sequence $\{\mu_n\}$ is tight by Theorem~\ref{th:loc}.  Let $\mu$ be any
subsequential weak limit.  The assumed localization near $K$ implies
$\operatorname{supp}\mu\subset K$.  By Theorem~\ref{th:1},
\[
        \int z^m\,d\mu(z)=M_m(Q_k),\qquad m=0,1,2,\ldots .
\]
If $\mu$ and $\nu$ are two subsequential limits, then they have the same
integrals against all polynomials in $z$.  Since $K$ has empty interior and
connected complement, Mergelyan's theorem implies that polynomials are uniformly
dense in $C(K)$.  Hence $\int f\,d\mu=\int f\,d\nu$ for every continuous function
$f$ on $K$, and therefore $\mu=\nu$.  All subsequential limits coincide, which is
equivalent to weak convergence of the whole sequence.  The limit depends only on
$Q_k$ because its moments are given by \eqref{eq:moment-formula}.
\end{proof}

\begin{proof}[Proof of Theorem~\ref{th:collinear-convergence}]
If all zeros of $Q_k$ are collinear, then $K_Q$ is a compact segment.  Theorem
\ref{th:loc} gives the localization hypothesis of Theorem~\ref{th:determinacy}
with $K=K_Q$, and therefore $\mu_n$ converges weakly to a probability measure
$\mu_Q$ supported on $K_Q$ and depending only on $Q_k$.

Assume now that the line is the real line and write
\[
        Q_k(z)=\prod_{j=0}^k(z-\xi_j)
        =z^{k+1}+a_kz^k+a_{k-1}z^{k-1}+\cdots .
\]
Then $a_k=-\sum_j\xi_j$ and $a_{k-1}=\sum_{i<j}\xi_i\xi_j$.  Corollary
\ref{cor:first-moments} gives
\[
        \int x\,d\mu_Q(x)=-\frac{a_k}{k+1}=\bar\xi.
\]
Substitution of the same elementary symmetric functions into the second moment
formula in Corollary~\ref{cor:first-moments} yields
\[
        \int (x-\bar\xi)^2\,d\mu_Q(x)
        =\frac{k}{2k+1}\left(\frac1{k+1}\sum_{j=0}^k\xi_j^2-\bar\xi^2\right),
\]
which is the announced formula.
\end{proof}

\begin{proposition}[Complete expansion of fixed spectral moments]
\label{prop:moment-expansion}
For every fixed $m\ge1$ the normalized power sum
\[
        \frac1{n+1}\sum_{j=1}^{n+1}z_{n,j}^m
        =\frac1{n+1}\operatorname{tr}B_n^m
\]
has a complete asymptotic expansion in descending powers of $n$,
\[
        \frac1{n+1}\operatorname{tr}B_n^m
        \sim \sum_{r\ge0}\frac{M_{m,r}}{n^r},\qquad n\to\infty .
\]
The leading coefficient is $M_{m,0}=M_m(Q_k)$.  Each coefficient $M_{m,r}$ is
obtained by a finite algorithm from the coefficients of $Q_1,\ldots,Q_k$; the
lower coefficients of $\dq$ first occur in correction terms $r\ge1$.
\end{proposition}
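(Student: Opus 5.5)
The plan is to write the entries of $B_n=Z-\lambda_n^{-1}\dq$ explicitly as rational functions of $(j,n)$ and then expand the trace as a finite sum of polynomial sums. In the monomial basis, $Q_i(z)\,d^i/dz^i$ sends $z^j$ to $\sum_{l}q_{i,l}\,j(j-1)\cdots(j-i+1)\,z^{j-i+l}$. Hence the entry of $B_n$ on diagonal $\ell$ in column $j$ has the form
\[
\delta_{\ell,1}-\lambda(n)^{-1}\sum_{i=1}^k q_{i,\,i+\ell}\,(j)_i ,
\]
where $(j)_i$ is the falling factorial and $\ell=l-i$ ranges over $[1-k,1]$. Each entry is therefore a polynomial $P_\ell(j)$ of degree at most $k$, with coefficients depending only on the $q_{i,l}$, divided by $\lambda(n)$. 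Here $\lambda(n)$ is a polynomial in $n$ of exact degree $k$ with leading coefficient $1$, because $Q_k$ is monic.

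Next I expand $\operatorname{tr}B_n^m$ as a sum over closed paths $(\ell_1,\dots,\ell_m)$ with $\sum\ell_r=0$. For a fixed path and starting index $j$, the product of entries is
\[
\lambda(n)^{-s}\,F(j),
\]
where $s\le m$ is the number of steps taken from the $\dq$-part and $F$ is a polynomial in $j$ of degree at most $ks$. The polynomial $F$ is obtained by substituting the shifted indices $j+\ell_1+\cdots+\ell_{r-1}$ into the $P_{\ell_r}$. The range of admissible $j$ is $0\le j\le n-c$ for some offset $c=c(\text{path})$ that is bounded in terms of $m$ and $k$. It is cut off exactly near both ends; the truncation to $\operatorname{Pol}_{\le n}$ only kills outgoing steps leaving $[0,n]$, and the falling factorials vanish automatically near $0$.

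By Faulhaber's formula, $\sum_{j=0}^{n-c}F(j)$ is then a polynomial in $n$ of degree at most $ks+1$. So for each fixed $m$,
\[
\frac{1}{n+1}\operatorname{tr}B_n^m=\frac{R_m(n)}{(n+1)\,\lambda(n)^m}
\]
is a rational function of $n$. Its numerator $R_m$ has degree at most $km+1$, while the denominator has exact degree $km+1$. A rational function whose numerator degree does not exceed its denominator degree is analytic at $n=\infty$, so it has a convergent, and in particular complete asymptotic, expansion in powers of $1/n$. The coefficients $M_{m,r}$ are computed by a finite algorithm: enumerate the paths, take power sums via Bernoulli polynomials, and perform Laurent division. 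They depend only on the $q_{i,l}$.

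The main point to check is the bookkeeping. First, I need to verify that boundary effects really only alter the summation limits by bounded shifts; they do not change the polynomial structure of the summand. This needs a small check at the lower boundary $j<i$, where $(j)_i=0$ anyway, and at the upper truncation. Second, I need to identify the orders of the various contributions. The leading $n$-power of $R_m(n)$ comes from the top-degree parts of the $P_\ell$, namely from $q_{k,k+1+\ell}\,j^k$ and the $Z$-term. There the Riemann sum becomes the integral $\int_0^1\CT_w b_\tau(w)^m\,d\tau$, so $M_{m,0}=M_m(Q_k)$ by Theorem~\ref{th:1}. The terms involving $q_{i,l}$ with $i<k$ lower the $j$-degree by at least one, and so they enter only at order $n^{-1}$ or smaller, that is, in the terms with $r\ge1$. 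Lower-order terms of the falling factorials and of $\lambda(n)$ also only contribute to $r\ge1$. This gives the final assertion.
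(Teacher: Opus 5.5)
Your argument is correct. It reaches the statement by a somewhat different and sharper route than the paper.

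\textbf{The paper's route.} It expands each band entry $\lambda_n^{-1}q_{i,j}(s)_i$ asymptotically in powers of $n^{-1}$, with coefficients polynomial in $s/n$. It then sums the interior rows by Euler--Maclaurin. The first and last $O(m)$ rows are disposed of by saying that they contribute their own finite expansions.

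\textbf{Your route.} You keep everything exact:
\begin{itemize}
\item The entries are $\delta_{\ell,1}-P_\ell(j)/\lambda(n)$ with $P_\ell$ polynomial.
\item The lower boundary is handled automatically. A downward exit from a position $0\le p<-\ell$ carries the factor $(p)_i$ with $i\ge-\ell>p$, which vanishes.
\item The upper boundary is a fixed shift $n-c$ of the summation limit. It is in fact automatic too, since the $\ell=1$ entry at $j=n$ equals $1-\lambda(n)/\lambda(n)=0$, but your explicit cutoff is needed once you split the $Z$-part from the $\dq$-part.
\item Faulhaber's formula then shows that
\[
\frac1{n+1}\operatorname{tr}B_n^m=\frac{R_m(n)}{(n+1)\lambda(n)^m}
\]
exactly, for all large $n$, with $\deg R_m\le km+1$.
\end{itemize}

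\textbf{What your route buys.} It gives more than the proposition asks for. The expansion in $1/n$ is convergent, not merely asymptotic. The boundary bookkeeping is exact rather than estimated. The ``finite algorithm'' becomes literal: path enumeration, Bernoulli polynomials, and division of polynomials.

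\textbf{What the paper's route buys.} The Euler--Maclaurin framework does not use the polynomial dependence of the entries on $s$ and $n$. It would survive in settings where the diagonals are only smooth functions of $s/n$ with asymptotic expansions in $n^{-1}$. For polynomial summands Euler--Maclaurin terminates and reduces to your Faulhaber computation, so here the two arguments agree in substance.

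Your treatment of $M_{m,0}$ is fine: it is the limit, hence equals $M_m(Q_k)$ by Theorem~\ref{th:1}. So is your treatment of the lower coefficients: they either lower the total degree in $(j,n)$ or enter only through subleading terms of $\lambda(n)$.
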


\begin{proof}
The entries of $B_n=Z-\lambda_n^{-1}\dq$ on each fixed diagonal are finite sums
of terms of the form
\[
        \lambda_n^{-1}q_{i,j}\,s(s-1)\cdots(s-i+1),
\]
where $s$ is the column index.  Since $\lambda_n$ is a degree $k$ polynomial in
$n$ with non-zero leading coefficient, these entries have, uniformly for
$s/n$ in compact subintervals of $[0,1]$, complete expansions in powers of
$n^{-1}$ whose coefficients are polynomials in $s/n$.  For fixed $m$, the trace
of $B_n^m$ is a finite sum over closed band paths of length $m$.  Away from the
first and last $O(m)$ rows, the contribution of each path therefore has a
complete expansion as a smooth function of $s/n$.  Euler--Maclaurin summation
gives a complete expansion for the sum over $s$, while the finitely many boundary
rows contribute only their own finite expansions.  Division by $n+1$ gives the
claim.  The leading term is the constant-term/Riemann-sum expression of
Theorem~\ref{th:1}.
\end{proof}

\noindent 
{\it Examples.} (1) Set $k=2$ and $Q=z^3+a_2z^2+a_1z$. Then
\[
\Psi^2=(z+a_2\tau^2)\Psi+(1-\tau^2)\tau^2a_1 .
\]
With the branch chosen by $\Psi=z+O(1)$ at infinity,
\[
\C(z)=\int_0^1\frac{\Psi'_z}{\Psi}\,d\tau
      =\int_0^1\frac{d\tau}{2\Psi-(z+a_2\tau^2)}
      =\int_0^1 \frac{d\tau}
      {\sqrt{(z+a_2\tau^2)^2+4(1-\tau^2)\tau^2a_1}},
\]
where the square root is chosen to be $z+O(1)$ at infinity.

\medskip \noindent
(2) Set $k=3$ and $Q=z^4+a_3z^3+a_2z^2+a_1z$. Then
\[
\Psi^3=(z+a_3\tau^3)\Psi^2+(1-\tau^3)\tau^3a_2\Psi+(1-\tau^3)^2\tau^3a_1.
\]
Writing $A=z+a_3\tau^3$, $B=(1-\tau^3)\tau^3a_2$ and
$D=(1-\tau^3)^2\tau^3a_1$, we get
\[
\C(z)=\int_0^1\frac{\Psi'_z}{\Psi}\,d\tau
     =\int_0^1\frac{\Psi\,d\tau}{3\Psi^2-2A\Psi-B}
     =\int_0^1\frac{\Psi^2\,d\tau}{A\Psi^2+2B\Psi+3D}.
\]

\section{Holonomicity, Picard--Fuchs equations, and the case $k=3$}
\label{sec:pf-k3}

This section separates two related but logically different statements.  First, the
finite-band Cauchy transform \eqref{eq:Cauchy-nu} is holonomic in the spectral
parameter.  This is a rigorous consequence of creative telescoping.  Second, the
WKB periods satisfy a relatively simple Picard--Fuchs equation.  The WKB
Picard--Fuchs equation is expected to describe the one-dimensional mother body
selected by the spectral polynomials, but this identification is still conjectural.

\begin{proposition}[Holonomicity of the finite-band Cauchy transform]
\label{prop:finiteband-holonomic}
For every $k$ and every polynomial $Q_k$ as above, the germ $C_Q(t)$ defined by
\eqref{eq:Cauchy-nu} is holonomic.  Equivalently, there exist an integer $N$ and
polynomials $A_0(t),\ldots,A_N(t)$, not all zero, such that
\begin{equation}\label{eq:finiteband-ode-general}
        A_N(t)C_Q^{(N)}(t)+\cdots+A_1(t)C_Q'(t)+A_0(t)C_Q(t)=B(t),
\end{equation}
where $B(t)$ is an explicitly computable algebraic endpoint term.  The operator
and the endpoint term can be obtained algorithmically by creative telescoping from
the algebraic integrand in \eqref{eq:Cauchy-nu}.
\end{proposition}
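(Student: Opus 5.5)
The plan is to view $C_Q(t)=\int_0^1 F(t,\tau)\,d\tau$ with integrand
\[
        F(t,\tau)=\frac{\partial_t\Psi(t,\tau)}{\Psi(t,\tau)},
\]
and show that $F$ is algebraic over $\mathbb C(t,\tau)$. Then we apply the standard closure properties of holonomic functions: algebraic functions are holonomic, and definite integration of a holonomic function over a fixed interval gives a holonomic function modulo boundary terms (Zeilberger / Chyzak creative telescoping, or Lipshitz's theorem on D-finite functions).

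\textbf{Step 1: algebraicity.} Let $P(\Psi;t,\tau)$ be the left side minus the right side of \eqref{eq:rel-intro}. It is a polynomial in $\Psi,t,\tau$ and monic of degree $k$ in $\Psi$, since $\Theta=\tau^k$. Implicit differentiation gives $\partial_t\Psi=\Psi^{k-1}/P_\Psi(\Psi)$. Hence $F=\Psi^{k-2}/P_\Psi(\Psi)$ is a rational function of $(\Psi,t,\tau)$, and so $F$ lies in the algebraic extension $\mathbb C(t,\tau)[\Psi]$. This is made concrete in Examples (1)–(2). The branch $\Psi=t+O(1)$ is analytic for $|t|>R$ uniformly in $\tau\in[0,1]$, because the defining equation depends continuously on $\tau$ and the branch is separated from the other roots near infinity. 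Consequently $F$ is analytic on $\{|t|>R\}\times U$, where $U$ is a complex neighbourhood of $[0,1]$.

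\textbf{Step 2: holonomicity and telescoping.} An algebraic function of two variables is D-finite, i.e. holonomic, in $(t,\tau)$. Its annihilating ideal can be computed explicitly from the minimal polynomial, which yields linear ODEs in $t$ and in $\tau$. By the elimination property of holonomic ideals (Zeilberger's theorem, with existence guaranteed by Bernstein's dimension argument), there is a telescoper $L=\sum_{i=0}^N A_i(t)\partial_t^i$, not zero, together with a certificate $G$ such that
\[
        L F=\partial_\tau G .
\]
Here $G$ is a $\mathbb C(t,\tau)$-linear combination of derivatives of $F$, so $G$ is again algebraic, indeed rational in $(\Psi,t,\tau)$. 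We then integrate over $\tau\in[0,1]$ and interchange $\partial_t$ with the integral, which is justified by uniform analyticity on compact sets of $|t|>R$. This gives $L C_Q=G(t,1)-G(t,0)=:B(t)$. At $\tau=0$ and $\tau=1$ the equation degenerates to $\Psi^{k-1}(\Psi-t)=0$ and $\Psi^k=(t+a_k)\Psi^{k-1}$ respectively. In both cases the branch is explicit ($\Psi=t$, resp. $t+a_k$), so $B(t)$ is an explicit algebraic, indeed rational, function.

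\textbf{Main obstacle.} The delicate point is that the certificate $G$ may have poles in $\tau$ inside $[0,1]$ for the fixed $t$ under consideration. Such poles would invalidate the evaluation $\int_0^1\partial_\tau G=G(1)-G(0)$ and could make $B$ fail to be the endpoint difference. My plan is as follows. The denominators of $G$ are polynomials $d(t,\tau)$. For each $\tau$ in the compact interval, the set of $t$ with $d(t,\tau)=0$ is finite unless $d$ is independent of $t$. A $t$-independent factor could be absorbed by clearing it into the telescoper through a standard rescaling of the certificate, or avoided by choosing a telescoper/certificate pair whose denominator contains no pure-$\tau$ factors; Chyzak's algorithm allows this normalisation. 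Hence for $|t|$ large the denominators do not vanish on $[0,1]$, and the identity holds on a neighbourhood of infinity. This suffices, since $C_Q$ is defined there as a germ. A secondary check is that $L\neq0$. This is guaranteed by the existence theorem, because $F$ is not itself a $\tau$-derivative of an algebraic function that is singular only at the endpoints, and in any case nonvanishing is part of the conclusion of Zeilberger's theorem.
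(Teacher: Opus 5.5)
Your proposal follows the paper's route. $\Psi$ is algebraic, so $F=\partial_t\log\Psi$ is algebraic and hence holonomic in $(t,\tau)$; creative telescoping then gives $L(t,\partial_t)F=\partial_\tau G$, and integrating over $[0,1]$ yields $LC_Q=G(t,1)-G(t,0)$. The details you add are correct and are left implicit in the paper: the formula $F=\Psi^{k-2}/P_\Psi$, analyticity of the branch on $\{|t|>R\}\times U$, differentiation under the integral sign, and the endpoint branches $\Psi(t,0)=t$, $\Psi(t,1)=t+a_k$.

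The paragraph on the ``main obstacle'' does not work as written, although the issue it raises (which the paper ignores) is easily settled.

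\textbf{Where the patch fails.}
\begin{itemize}
\item Pointwise finiteness of $\{t: d(t,\tau)=0\}$ does not give a neighbourhood of infinity in which $d\neq0$ on all of $[0,1]$. This fails whenever the leading $t$-coefficient of $d$ vanishes somewhere on $[0,1]$. For example, $d=(\tau-\tfrac12)t-1$ vanishes at $\tau=\tfrac12+1/t\in[0,1]$ for every real $t$ with $|t|\ge2$.
\item A factor $p(\tau)$ cannot be ``absorbed into the telescoper'', since the telescoper is $\tau$-free.
\item Rescaling the certificate by $p(\tau)$ destroys the identity $LF=\partial_\tau G$.
\item No known normalisation of Chyzak's algorithm excludes such factors.
\end{itemize}

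\textbf{The correct fix.} Apparent poles of the certificate are automatically removable. On the chosen branch, $G$ is meromorphic on $\{|t|>R\}\times U$, while $\partial_\tau G=LF$ is holomorphic there. A pole of order $m\ge1$ of $\tau\mapsto G(t,\tau)$ would produce a pole of order $m+1$ of its derivative. Hence, for all $t$ outside a discrete set, $G(t,\cdot)$ is holomorphic on $U$, including at $\tau=0,1$. Therefore $\int_0^1 LF\,d\tau=G(t,1)-G(t,0)$, with the endpoint values understood as limits.

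These limits are rational in $t$. Near $\tau=0$ (resp.\ $\tau=1$), $\Psi$ is a power series in $\tau$ (resp.\ $\tau-1$) over $\mathbb C(t)$, by Hensel's lemma applied at the simple roots $t$ and $t+a_k$. So $G$ is a Laurent series over $\mathbb C(t)$ there, and its negative-power coefficients vanish because $\partial_\tau G$ has none. The endpoint values are thus constant terms in $\mathbb C(t)$, giving $B\in\mathbb C(t)$ as you claimed. Moreover $B$ extends holomorphically to $|t|>R$, because $LC_Q$ does.

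Your first justification that $L\ne0$ is unnecessary: non-vanishing is part of the existence theorem. With the fix above replacing that paragraph, your argument is complete and somewhat more careful than the paper's.
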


\begin{proof}
The function $\Psi(t,\tau)$ is algebraic over $\mathbb C(t,\tau)$, because it is
defined by the polynomial equation \eqref{eq:rel}.  Therefore
\[
        G(t,\tau)=\frac{\partial}{\partial t}\log\Psi(t,\tau)
\]
is algebraic, and in particular holonomic, as a function of the two variables
$(t,\tau)$.  The creative-telescoping theorem for holonomic functions, see e.g.
\cite{Zeilberger}, gives a non-zero operator
\[
        L(t,\partial_t)=A_N(t)\partial_t^N+\cdots+A_0(t)
\]
and a holonomic certificate $H(t,\tau)$ such that
\[
        L(t,\partial_t)G(t,\tau)=\frac{\partial}{\partial\tau}H(t,\tau).
\]
Integrating over $0\leq \tau\leq 1$ gives \eqref{eq:finiteband-ode-general}, with
$B(t)=H(t,1)-H(t,0)$.  This proves holonomicity of the finite-band exterior
Cauchy transform.  The argument is constructive, although the order and the
coefficients of the telescoper can grow quickly with $k$.
\end{proof}

The preceding proposition gives an algorithmic linear
ODE for the finite-band exterior Cauchy transform.  What it does not give, at
least not in a simple closed form, is the geometric tree carrying the conjectural
mother-body measure.  For that purpose the WKB Picard--Fuchs equation seems more
transparent.

\medskip
The following explicit elimination algorithm is the practical version of Proposition~\ref{prop:finiteband-holonomic}.  It applies without any conceptual change for every order $k$, in particular for $k>4$.

\begin{enumerate}
\item Choose one zero of $Q_k$ as the distinguished origin and write
\[
        Q_k(z)=z^{k+1}+a_kz^k+\cdots+a_1z .
\]
Put $\theta=\tau^k$ and form the algebraic equation
\begin{equation}\label{eq:alg-psi}
        F(\Psi,t,\tau)=
        \Psi^k-(t+a_k\theta)\Psi^{k-1}
        -\sum_{j=1}^{k-1}(1-\theta)^j\theta a_{k-j}\Psi^{k-1-j}=0 .
\end{equation}
The branch relevant at infinity is characterized by $\Psi=t+O(1)$.

\smallskip
\item Form the algebraic integrand
\begin{equation}\label{eq:alg-integrand}
        G(t,\tau)=\partial_t\log\Psi(t,\tau)
        =-\frac{F_t(\Psi,t,\tau)}{\Psi F_\Psi(\Psi,t,\tau)}
\tag{*}
\end{equation}
with the sign convention determined by differentiating $F(\Psi,t,\tau)=0$.  Equivalently, use $\Psi_t=-F_t/F_\Psi$ and $G=\Psi_t/\Psi$.

\smallskip
\item Work in the finite algebraic extension
\[
        \mathbb C(t,\tau)[\Psi]/(F).
\]
Every derivative $\partial_t^mG$ and $\partial_\tau H$ is reduced modulo $F$ to a polynomial in $\Psi$ of degree at most $k-1$ with coefficients in $\mathbb C(t,\tau)$.

\smallskip
\item Search for a telescoping identity
\begin{equation}\label{eq:alg-telescope}
        \sum_{m=0}^N A_m(t)\partial_t^mG(t,\tau)=\partial_\tau H(t,\tau)
\end{equation}
inside this finite-dimensional module.  After clearing denominators, this is a finite homogeneous linear system for the coefficients of the unknown polynomials $A_m(t)$ and of the certificate $H$.  Increasing $N$ and the allowed degree bounds eventually succeeds by holonomicity.

\item Integrate \eqref{eq:alg-telescope} from $\tau=0$ to $\tau=1$.  Since
\[
        C_Q(t)=\int_0^1G(t,\tau)d\tau,
\]
one obtains
\begin{equation}\label{eq:alg-final-ode}
        \sum_{m=0}^N A_m(t) C_Q^{(m)}(t)=H(t,1)-H(t,0).
\end{equation}
The right-hand side is the endpoint contribution.  It is usually much simpler than the intermediate certificate.
\end{enumerate}

The algorithm is not meant to produce a pleasant closed formula for large $k$; it is meant to prove existence and to make the differential equation computable.  On the other hand, the Picard--Fuchs equation below is geometrically more transparent, but it governs WKB periods rather than, at present, the finite-band Cauchy transform itself.

\medskip
Let
\[
        Q(t)=t^{k+1}+a_kt^k+\cdots+a_0
\]
be a polynomial of degree $k+1$, and consider the WKB curve
\begin{equation}
\label{eq:pf-curve}
        \mathcal X_t:\qquad y^k=\frac{z-t}{Q(z)}.
\end{equation}
The differential whose periods are obtained by differentiating the WKB action
\eqref{eq:wkb-differential} with respect to $t$ is, up to a non-zero constant,
\begin{equation}
\label{eq:eta-general}
        \eta_t=(z-t)^{1/k-1}Q(z)^{-1/k}\,dz .
\end{equation}

\begin{proposition}[Picard--Fuchs equation for WKB periods]
\label{prop:general-pf}
Let $\gamma$ be a locally constant cycle on \eqref{eq:pf-curve}, and set
\[
        I_\gamma(t)=\int_\gamma \eta_t .
\]
Put $\alpha=(k-1)/k$.  Then $I_\gamma$ satisfies the order-$k$ linear equation
\begin{equation}
\label{eq:general-pf}
        \mathcal L_k I_\gamma=0,
        \qquad
        \mathcal L_k=
        \sum_{j=0}^{k}
        \frac{k+1-j}{j!(\alpha)_{k-j}}
        Q^{(j)}(t)\frac{d^{k-j}}{dt^{k-j}},
\end{equation}
where $(\alpha)_m$ is the Pochhammer symbol.
\end{proposition}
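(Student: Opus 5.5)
The plan is to find an exact differential on $\mathcal X_t$ whose expansion in powers of $z-t$ reproduces the operator $\mathcal L_k$ term by term. Write $u=z-t$ and $\alpha=(k-1)/k$, so that $\eta_t=u^{-\alpha}Q(z)^{-1/k}\,dz$. Since $\gamma$ is locally constant and stays away from $z=t$ and the zeros of $Q$, we may differentiate under the integral sign. Using $\partial_t u^{-\alpha}=\alpha u^{-\alpha-1}$, this gives
\[
        I_\gamma^{(m)}(t)=(\alpha)_m\int_\gamma u^{-\alpha-m}Q(z)^{-1/k}\,dz .
\]
Hence $\mathcal L_kI_\gamma$ equals the integral over $\gamma$ of
\[
        \sum_{j=0}^k\frac{(k+1-j)\,Q^{(j)}(t)}{j!}\,u^{-\alpha-(k-j)}Q(z)^{-1/k}\,dz .
\]
It therefore suffices to show that this differential is a constant multiple of an exact differential $d\Phi$, where $\Phi$ is single-valued along $\gamma$.

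For the primitive I will take the ansatz $\Phi=u^{\beta}Q(z)^{1-1/k}$ with an exponent $\beta$ still to be fixed. This is a legitimate function on the curve, since $u^\beta Q^{1-1/k}$ differs from a rational function times a power of $y$ only by an integer power of $u$ once $\beta\equiv 1/k \pmod 1$. Differentiating gives
\[
        d\Phi=Q^{-1/k}\bigl(\beta u^{\beta-1}Q+\alpha u^\beta Q'\bigr)dz .
\]
Next I substitute the Taylor expansions $Q(z)=\sum_{j=0}^{k+1}c_ju^j$ with $c_j=Q^{(j)}(t)/j!$, and $Q'(z)=\sum_j jc_ju^{j-1}$. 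This turns the bracket into
\[
        \sum_{j=0}^{k+1}c_j(\beta+\alpha j)\,u^{\beta-1+j}.
\]
To match the exponents $-\alpha-(k-j)$, I choose $\beta=1-\alpha-k=1/k-k$.

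The key check is then twofold. First, the top term $j=k+1$, which would produce an unwanted power $u^{1-\alpha}$, must vanish. This holds because $\beta+\alpha(k+1)=1/k-k+(k^2-1)/k=0$. Second, the remaining coefficients must be proportional to $k+1-j$. Indeed,
\[
        \beta+\alpha j=-\frac{(k-1)(k+1-j)}{k}.
\]
Together these give
\[
        d\Phi=-\frac{k-1}{k}\sum_{j=0}^{k}(k+1-j)\,c_j\,u^{-\alpha-(k-j)}Q^{-1/k}\,dz .
\]
Multiplying by $-k/(k-1)$ (harmless, since $k\ge2$) and integrating over the closed cycle $\gamma$ gives $\mathcal L_kI_\gamma=0$.

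The main obstacle I expect is not the algebra but the justification of the last step. One must check that $\Phi$ returns to its initial value around $\gamma$, i.e. that $\gamma$ is a genuine cycle for the local system defined by $u^{-\alpha}Q^{-1/k}$. The same monodromy must then be seen by $u^\beta Q^{1-1/k}$, whose exponents differ from those of $\eta_t$ only by integers. I will state this as the meaning of a ``locally constant cycle'' on \eqref{eq:pf-curve}, for instance a closed loop on the Riemann surface or a Pochhammer-type cycle. With that convention, $\int_\gamma d\Phi=0$ is immediate.
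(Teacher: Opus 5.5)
Your proof is correct and is essentially the paper's argument. Your primitive $\Phi=(z-t)^{1/k-k}Q(z)^{1-1/k}$ is exactly the paper's $Q(z)(z-t)^{1-k}F$, and expanding $d\Phi$ in Taylor coefficients at $z=t$ reproduces the identity \eqref{eq:general-exact}; your explicit check that $\beta+\alpha(k+1)=0$ also states the cancellation of the $Q^{(k+1)}$ term more precisely than the paper's one-line remark does.
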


\begin{proof}
Write $s=z-t$ and
\[
        F=s^{1/k-1}Q(z)^{-1/k}.
\]
Then
\[
        \frac{d^r}{dt^r}F=(\alpha)_r s^{-r}F,
        \qquad \alpha=\frac{k-1}{k}.
\]
A direct calculation gives
\begin{equation}\label{eq:general-exact}
\frac{\partial}{\partial z}\left(\frac{Q(z)}{(z-t)^{k-1}}F\right)
= -\frac{k-1}{k}
\left(
(k+1)Q(t)s^{-k}+\sum_{j=1}^{k}\frac{k+1-j}{j!}Q^{(j)}(t)s^{j-k}
\right)F .
\end{equation}
The term involving $Q^{(k+1)}$ cancels because $Q$ has degree $k+1$.  Integrating
\eqref{eq:general-exact} over a closed cycle and rewriting
$\int s^{-(k-j)}F\,dz$ as $I_\gamma^{(k-j)}(t)/(\alpha)_{k-j}$ gives exactly
\eqref{eq:general-pf}.
\end{proof}

For $k=2$, equation \eqref{eq:general-pf} becomes, after division by a constant,
\[
        Q(t)I''(t)+Q'(t)I'(t)+\frac{Q''(t)}8I(t)=0,
\]
which is the homogeneous part of the Shapiro--Tater equation for the classical
Heun spectral problem, see \cite{ShT}.  The normalized exterior Cauchy transform in the classical
case satisfies the inhomogeneous equation
\begin{equation}
\label{eq:classical-pf}
        Q(t)C''(t)+Q'(t)C'(t)+\frac{Q''(t)}8 C(t)+\frac{Q'''(t)}{24}=0,
\end{equation}
see \cite{ShT,STT}.  The inhomogeneous term records the normalization
$C(t)=t^{-1}+O(t^{-2})$ at infinity.

\medskip
We now spell out the first non-classical case $k=3$.  Let
\begin{equation}
\label{eq:quartic-Q}
        Q(t)=t^4+a_3t^3+a_2t^2+a_1t+a_0.
\end{equation}
For the cyclic triple cover
\[
        y^3=\frac{z-t}{Q(z)}
\]
and the differential
\[
        \eta_t=(z-t)^{-2/3}Q(z)^{-1/3}\,dz,
\]
Proposition~\ref{prop:general-pf} gives the following explicit equation.

\begin{corollary}[The cubic WKB Picard--Fuchs operator]
\label{cor:k3-pf}
For every locally constant cycle $\gamma$ on the above triple cover, the period
$I_\gamma(t)=\int_\gamma\eta_t$ satisfies
\begin{equation}
\label{eq:k3-pf}
        81Q(t)I_\gamma'''(t)+162Q'(t)I_\gamma''(t)
        +90Q''(t)I_\gamma'(t)+10Q'''(t)I_\gamma(t)=0 .
\end{equation}
\end{corollary}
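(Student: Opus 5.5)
The plan is to specialize Proposition~\ref{prop:general-pf} to $k=3$ and clear denominators. No new analytic input should be needed, since the triple cover and the differential $\eta_t=(z-t)^{-2/3}Q(z)^{-1/3}dz$ are exactly the case $k=3$ of \eqref{eq:pf-curve} and \eqref{eq:eta-general}.

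First I set $k=3$, so $\alpha=2/3$. I compute the Pochhammer symbols
\[
(\alpha)_1=\tfrac23,\qquad (\alpha)_2=\tfrac23\cdot\tfrac53=\tfrac{10}{9},\qquad (\alpha)_3=\tfrac23\cdot\tfrac53\cdot\tfrac83=\tfrac{80}{27},
\]
with $(\alpha)_0=1$.

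Next I evaluate the four coefficients $\frac{k+1-j}{j!(\alpha)_{k-j}}$ of \eqref{eq:general-pf} for $j=0,1,2,3$:
\[
j=0:\ \frac{4}{80/27}=\frac{27}{20},\qquad
j=1:\ \frac{3}{10/9}=\frac{27}{10},\qquad
j=2:\ \frac{2}{2\cdot(2/3)}=\frac32,\qquad
j=3:\ \frac{1}{6}.
\]
Hence $\mathcal L_3=\frac{27}{20}Q\,\partial_t^3+\frac{27}{10}Q'\partial_t^2+\frac32Q''\partial_t+\frac16Q'''$. Multiplying by the common denominator $60$ gives the coefficients $81,162,90,10$, which is exactly \eqref{eq:k3-pf}.

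The only point requiring care is the validity of \eqref{eq:general-exact} for $k=3$. In particular I would check that the $Q^{(4)}$ term genuinely cancels against the degree-$4$ hypothesis on $Q$ in \eqref{eq:quartic-Q}, and that $\gamma$ is a closed cycle on which the exact form $d\big(Q(z)(z-t)^{-2}F\big)$ integrates to zero. I would double-check this cancellation directly for $k=3$ by expanding $Q(z)$ in Taylor series around $z=t$ in powers of $s=z-t$. I expect this verification of the general identity, rather than the arithmetic, to be the main (though mild) obstacle. The remaining step is the conversion $\int_\gamma s^{-r}F\,dz=I_\gamma^{(r)}/(\alpha)_r$, which follows from $\partial_t^rF=(\alpha)_rs^{-r}F$ and differentiation under the integral for a locally constant cycle.
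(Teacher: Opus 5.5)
Your proposal is correct and follows the paper's route exactly: the corollary is Proposition~\ref{prop:general-pf} specialized to $k=3$, $\alpha=2/3$, and your coefficients $\tfrac{27}{20},\tfrac{27}{10},\tfrac32,\tfrac16$, multiplied by $60$, give $81,162,90,10$. On the point you flag, the $Q^{(4)}$ term drops out automatically because its coefficient $k+1-j$ vanishes at $j=4$; the degree-$4$ hypothesis is needed only so that the Taylor expansion of $Q$ about $z=t$ stops at that term.
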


If the normalized Cauchy transform of the selected spectral mother body is governed
by this WKB Picard--Fuchs operator, then its analogue of \eqref{eq:classical-pf}
would be
\begin{equation}
\label{eq:k3-inhom}
        81Q(t)C'''(t)+162Q'(t)C''(t)+90Q''(t)C'(t)+10Q'''(t)C(t)+30=0,
\end{equation}
for monic quartic $Q$.  Indeed, substituting
$C(t)=t^{-1}+O(t^{-2})$ into the left-hand side of the homogeneous operator in
\eqref{eq:k3-pf} gives the constant $-30$; all further terms are negative powers
of $t$.

\medskip
In the cubic case the WKB/Picard--Fuchs prediction is compatible with the
finite-band exterior germ in the following precise sense: for \(k=3\), the
finite-band averaging and the WKB Picard--Fuchs operator give the same exterior
differential equation.

\medskip
For reference we record the finite-band $k=3$ germ explicitly.  In the coordinate
where
\[
        Q_3(z)=z^4+a_3z^3+a_2z^2+a_1z,
\]
let
\[
        A=t+a_3\tau^3,
        \qquad B=(1-\tau^3)\tau^3a_2,
        \qquad D=(1-\tau^3)^2\tau^3a_1,
\]
and let $\Psi=\Psi(t,\tau)$ be the branch satisfying $\Psi=t+O(1)$ at infinity of
\begin{equation}
\label{eq:k3-cubic-psi}
        \Psi^3=A\Psi^2+B\Psi+D .
\end{equation}
Then Theorem~\ref{th:2} gives
\begin{equation}
\label{eq:k3-finiteband-C}
        C_Q(t)=\int_0^1
        \frac{\Psi(t,\tau)\,d\tau}{3\Psi(t,\tau)^2-2A\Psi(t,\tau)-B}.
\end{equation}

\begin{proposition}[The cubic finite-band germ satisfies the cubic Picard--Fuchs equation]
\label{prop:k3-finiteband-pf}
Assume that the distinguished zero of the monic quartic leading coefficient has
been placed at the origin,
\[
        Q(t)=t^4+a_3t^3+a_2t^2+a_1t .
\]
Let \(C_Q(t)\) be the finite-band exterior germ
\[
        C_Q(t)=\int_0^1
        \frac{\Psi(t,\tau)\,d\tau}
        {3\Psi(t,\tau)^2-2A\Psi(t,\tau)-B},
\]
where
\[
        A=t+a_3\tau^3,\qquad
        B=(1-\tau^3)\tau^3a_2,\qquad
        D=(1-\tau^3)^2\tau^3a_1,
\]
and where \(\Psi=\Psi(t,\tau)\) is the branch satisfying
\(\Psi=t+O(1)\) at infinity of
\[
        \Psi^3=A\Psi^2+B\Psi+D .
\]
Then, as a germ at infinity, and hence by analytic continuation on the exterior
domain where \(C_Q\) is defined,
\begin{equation}
\label{eq:k3-finiteband-pf}
        81Q(t)C_Q'''(t)+162Q'(t)C_Q''(t)
        +90Q''(t)C_Q'(t)+10Q'''(t)C_Q(t)+30=0 .
\end{equation}
\end{proposition}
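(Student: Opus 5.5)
The plan is to verify \eqref{eq:k3-finiteband-pf} coefficientwise at infinity, using the moment formula of Theorem~\ref{th:1}. By Theorem~\ref{th:2}, near infinity we have
\[
C_Q(t)=\sum_{m\ge0}M_m\,t^{-m-1},\qquad M_0=1,\qquad M_m=\int_0^1\CT_w\bigl[b_\tau(w)^m\bigr]\,d\tau,
\]
with $b_\tau(w)=(1-\theta)w^{-1}-\theta(a_3+a_2w+a_1w^2)$ and $\theta=\tau^3$. Substituting this series into the left-hand side of \eqref{eq:k3-finiteband-pf}, with $Q=t^4+a_3t^3+a_2t^2+a_1t$, $Q'''=24t+6a_3$, and so on, each power $t^{-m}$ receives contributions from $M_{m-1}$ (through the $t^4$, $t^3$, $t^2$, $t$ parts of $Q,Q',Q'',Q'''$), from $M_{m-2}$ (through the $a_3$ parts), from $M_{m-3}$ (through $a_2$) and from $M_{m-4}$ (through $a_1$).

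Two checks come first. The constant term $t^0$ involves only $M_0$, and it is exactly the computation already recorded after \eqref{eq:k3-inhom}: it gives $-30$, which is cancelled by the $+30$. Hence the claim reduces to a four-term linear recurrence
\[
c_0(m)M_m+a_3c_1(m)M_{m-1}+a_2c_2(m)M_{m-2}+a_1c_3(m)M_{m-3}=0 \qquad (m\ge1),
\]
where the $c_i$ are explicit polynomials in $m$ (cubic, coming from the falling factorials produced by three derivatives), read off directly from $81Q\partial^3+162Q'\partial^2+90Q''\partial+10Q'''$.

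Next I make $M_m$ explicit. In $\CT_w b_\tau^m$, choose $i$ factors $(1-\theta)w^{-1}$, $j$ factors $-\theta a_2w$, $l$ factors $-\theta a_1w^2$ and $p$ factors $-\theta a_3$. These are constrained by $i=j+2l$ and $i+j+l+p=m$. The $\tau$-integral is a Beta integral,
\[
\int_0^1(1-\tau^3)^i\tau^{3(m-i)}\,d\tau=\tfrac13 B\bigl(m-i+\tfrac13,\,i+1\bigr).
\]
So $M_m$ is a finite sum of monomials $a_3^pa_2^ja_1^l$ whose coefficients are multinomial coefficients times Gamma ratios, i.e.\ hypergeometric terms in $(p,j,l)$. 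Assign weights $1,2,3$ to $a_3,a_2,a_1$; then $M_m$ is weighted-homogeneous of degree $m$, and the recurrence is homogeneous for this grading. It therefore suffices to verify it monomial by monomial. Fixing the exponents of $a_3^pa_2^ja_1^l$ in the recurrence pulls coefficients from the four index shifts $(p,j,l)$, $(p-1,j,l)$, $(p,j-1,l)$, $(p,j,l-1)$. Dividing by a common hypergeometric term turns the identity into a rational-function identity in $p,j,l$, which can be checked by direct algebra.

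The main obstacle is this last verification. It is not clear a priori that the cubic coefficients $c_i(m)$ are the right ones, since that is precisely the content of the claim. My first attempt will be the monomial-by-monomial ratio computation above, with the reduction to the cases $a_1=0$ and $a_2=0$ serving as sanity checks, since those reduce to one-parameter ${}_pF_q$ identities. If the termwise identity fails, so that cancellation only occurs after summation, I would switch to the integrand level of Proposition~\ref{prop:finiteband-holonomic}. There I would look for a certificate $H(t,\tau)\in\mathbb C(t,\tau)[\Psi]/(F)$ with $\mathcal L_3G=\partial_\tau H$, using the endpoint values $\Psi(t,0)=t$ and $\Psi(t,1)=t+a_3$ (where $B=D=0$). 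The condition $H(t,1)-H(t,0)=-30$ would then supply the inhomogeneous term. Finally, analytic continuation from the germ at infinity to the exterior domain is immediate from Theorem~\ref{th:2}.
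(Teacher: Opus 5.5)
This is essentially the paper's proof: it also expands $C_Q=\sum_{m\ge0}M_mt^{-m-1}$ and gets $-30$ from $M_0$. It then reduces to your four-term recurrence with $c_0=(3m-5)(3m-2)(3m+1)$, $c_1=(3m-5)(3m-2)^2$, $c_2=3(m-1)(3m-5)(3m-4)$, $c_3=27(m-1)(m-2)^2$ (the $t^{-m}$ coefficient of the operator applied to $C_Q$ is $-3$ times your left-hand side), writes $M_m$ through the same multinomial--Beta coefficients, and checks each monomial $a_3^pa_2^\ell a_1^r$ via the three shift ratios.

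The termwise check does succeed, so the telescoping fallback is not needed: after dividing by the coefficient of $a_3^pa_2^\ell a_1^r$ in $M_m$, everything reduces to the identity $3m(q-\tfrac23)-(3m-2)p-(3m-4)\ell-3(m-2)r=0$, with $q=m-\ell-2r$ and $p=m-2\ell-3r$. Fix one small slip: $t^{-m}$ receives $M_m,\dots,M_{m-3}$ (as your recurrence already records), not $M_{m-1},\dots,M_{m-4}$.
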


\begin{proof}
Put
\[
        b_\tau(w)=(1-\tau^3)w^{-1}
        -\tau^3(a_3+a_2w+a_1w^2).
\]
By the finite-band moment formula, the expansion of \(C_Q\) at infinity is
\[
        C_Q(t)=\frac1t+\sum_{m\ge1}\frac{M_m}{t^{m+1}},
        \qquad
        M_m=\int_0^1 \CT_w\, b_\tau(w)^m\,d\tau .
\]
We shall prove that these moments satisfy exactly the recurrence imposed by
\eqref{eq:k3-finiteband-pf}.

Expanding \(b_\tau(w)^m\), the constant term is obtained as follows.  Let
\(\ell\) be the number of factors \(-\tau^3a_2w\), and let \(r\) be the number
of factors \(-\tau^3a_1w^2\).  Then the number of factors
\((1-\tau^3)w^{-1}\) must be \(\ell+2r\), while the number of factors
\(-\tau^3a_3\) is \(m-2\ell-3r\).  Thus
\[
        M_m=\sum_{\substack{\ell,r\ge0\\2\ell+3r\le m}}
        c_m(\ell,r)\,
        a_3^{m-2\ell-3r}a_2^\ell a_1^r ,
\]
where
\[
        c_m(\ell,r)=
        \frac{(-1)^{m-\ell-2r}m!\,
        \Gamma(m-\ell-2r+\frac13)}
        {3\,\Gamma(m+\frac43)\,
        (m-2\ell-3r)!\,\ell!\,r!}.
\]
Indeed, after the substitution \(s=\tau^3\),
\[
        \int_0^1
        (1-\tau^3)^{\ell+2r}
        \tau^{3(m-\ell-2r)}\,d\tau
        =
        \frac13
        B\!\left(m-\ell-2r+\frac13,\ell+2r+1\right),
\]
which gives the displayed coefficient.

Now substitute
\[
        C_Q(t)=\sum_{m\ge0}M_m t^{-m-1},\qquad M_0=1,
\]
into the differential operator
\[
        \mathcal L
        =
        81Q(t)\frac{d^3}{dt^3}
        +162Q'(t)\frac{d^2}{dt^2}
        +90Q''(t)\frac{d}{dt}
        +10Q'''(t).
\]
The constant term of \(\mathcal L C_Q\) is \(-30\).  For \(m\ge1\), the
coefficient of \(t^{-m}\) in \(\mathcal L C_Q\) is \(-3R_m\), where
\[
\begin{split}
        R_m={}&
        (3m-5)(3m-2)(3m+1)M_m \\
        &+a_3(3m-5)(3m-2)^2M_{m-1} \\
        &+3a_2(m-1)(3m-5)(3m-4)M_{m-2} \\
        &+27a_1(m-2)^2(m-1)M_{m-3},
\end{split}
\]
with the convention \(M_j=0\) for \(j<0\).  Therefore it remains only to prove
\(R_m=0\) for every \(m\ge1\).

Because the \(M_m\)'s are polynomials in \(a_1,a_2,a_3\), it is enough to check
the coefficient of each monomial
\[
        a_3^{m-2\ell-3r}a_2^\ell a_1^r .
\]
The four possible contributions are
\[
        c_m(\ell,r),\qquad
        c_{m-1}(\ell,r),\qquad
        c_{m-2}(\ell-1,r),\qquad
        c_{m-3}(\ell,r-1),
\]
where inadmissible terms are understood as zero.  For admissible indices one has
\[
        \frac{c_{m-1}(\ell,r)}{c_m(\ell,r)}
        =
        -\frac{(m+\frac13)(m-2\ell-3r)}
        {m(m-\ell-2r-\frac23)},
\]
\[
        \frac{c_{m-2}(\ell-1,r)}{c_m(\ell,r)}
        =
        -\frac{\ell(m+\frac13)(m-\frac23)}
        {m(m-1)(m-\ell-2r-\frac23)},
\]
and
\[
        \frac{c_{m-3}(\ell,r-1)}{c_m(\ell,r)}
        =
        -\frac{r(m+\frac13)(m-\frac23)(m-\frac53)}
        {m(m-1)(m-2)(m-\ell-2r-\frac23)} .
\]
Substituting these three ratios into the coefficient of
\(a_3^{m-2\ell-3r}a_2^\ell a_1^r\) in \(R_m\), and bringing to a common
denominator, gives the identically zero numerator
\[
\begin{split}
& (3m-5)(3m-2)(3m+1) \\
&\quad
-(3m-5)(3m-2)^2
        \frac{(m+\frac13)(m-2\ell-3r)}
        {m(m-\ell-2r-\frac23)} \\
&\quad
-3\ell(m-1)(3m-5)(3m-4)
        \frac{(m+\frac13)(m-\frac23)}
        {m(m-1)(m-\ell-2r-\frac23)} \\
&\quad
-27r(m-2)^2(m-1)
        \frac{(m+\frac13)(m-\frac23)(m-\frac53)}
        {m(m-1)(m-2)(m-\ell-2r-\frac23)}
        =0 .
\end{split}
\]
Hence \(R_m=0\) for all \(m\ge1\).  Consequently
\[
        \mathcal L C_Q(t)+30=0
\]
as a Laurent series at infinity.  Since \(C_Q\) is analytic as an exterior germ,
the identity holds on its domain of analytic continuation.  This proves
\eqref{eq:k3-finiteband-pf}.
\end{proof}

Finally, the support prediction in the case $k=3$ becomes especially concrete.  The
curve \eqref{eq:pf-curve} is a cyclic triple cover with branch points at $z=t$ and
at the four zeros of $Q$.  When $t$ approaches a zero of $Q$, one active cycle
collapses; hence the leaves of the predicted tree are the four zeros of $Q$.  Its
edges should be components of real-period loci
\begin{equation}
\label{eq:k3-period-locus}
        \Re \int_\gamma (z-t)^{-2/3}Q(z)^{-1/3}\,dz=0,
\end{equation}
for admissible cycles $\gamma$, with density proportional to the variation of the
corresponding imaginary period.  Thus, for $k=3$, the proposed tree should be
reconstructible from level curves of solutions of the third-order equation
\eqref{eq:k3-pf}, together with the positivity condition in \eqref{eq:wkb-jump}.

\section{Finite-band potentials and mother bodies}
\label{sec:mother-body}

We now clarify the role of the averaged measure $\nu_\dq$ introduced above.  The finite-band recurrence does not necessarily give the limiting root-counting measure of the spectral polynomials themselves.  What it gives, by Theorems~\ref{th:1} and~\ref{th:2}, is the limiting exterior logarithmic potential on $\Omega_Q$, or equivalently all holomorphic moments together with their analytic continuation to the exterior of $K_Q$.  This distinction is essential when the roots of $Q_k$ are not real.

Indeed, for a typical complex polynomial $Q_k$ the support of
\[
        \nu_\dq=\int_0^1\nu_{Q_k}(\tau)d\tau
\]
need not be one-dimensional.  Since the supports of the frozen measures $\nu_{Q_k}(\tau)$ vary with $\tau$, their union can fill a two-dimensional region.  This is incompatible with the numerical pictures for the spectral polynomials in the case $r=1$, where the roots of $Sp_n$ appear to accumulate on a finite tree.  Thus the correct interpretation should be the following: the averaged finite-band measure $\nu_\dq$ determines an exterior potential, while the limiting zero measure of $Sp_n$ is a one-dimensional mother body for this potential.

\medskip

We therefore propose the following refinement of Conjecture~\ref{conj:main}.

\begin{conjecture}[Mother-body form of the spectral limit]
\label{conj:mother-body}
Let $\dq$ be a high-order Heun operator with Fuchs index one and with generic
leading coefficient $Q_k$.  The root-counting measures of $Sp_n$ converge weakly
to a positive measure $\mu_{Q_k}$ depending only on $Q_k$.  Its support is a
finite planar tree
\[
        \Gamma_{Q_k}\subset K_Q,
\]
whose leaves are the roots of $Q_k$.  Moreover
\[
        U^{\mu_{Q_k}}(t)=U^{\nu_\dq}(t)
\]
in the exterior of $K_Q$; equivalently, $\mu_{Q_k}$ has the same exterior Cauchy
transform as the averaged finite-band measure $\nu_\dq$.  Thus $\mu_{Q_k}$ is a
positive mother body for the exterior field determined by the finite-band
averaging procedure.
\end{conjecture}

This formulation is in line with the potential-theoretic language of mother bodies and critical measures, see for example \cite{BoShMother,MFR,KS}.  It also reconciles two apparently conflicting pieces of evidence.  On the one hand, the Kuijlaars--Van Assche type averaging procedure naturally gives a family of frozen recurrence measures and hence an averaged object which may have two-dimensional support.  On the other hand, the actual zeros of $Sp_n$ are observed to lie close to a one-dimensional tree.  Equality of exterior potentials allows both statements to be true simultaneously.

The mother-body viewpoint also explains why the real-rooted case is exceptional.  If all roots of $Q_k$ are real, the frozen recurrence measures live on subsets of the real line and the averaged measure is already one-dimensional.  In that case there is no need to collapse a two-dimensional potential to a skeleton; this is consistent with the equality $\nu_\dq=\mu_Q$ proved in the real case in \cite{BoSh}.

\begin{problem}[Mother body from the finite-band determinant]
\label{prob:finite-band-mother}
Prove that the exterior potential determined by the finite-band recurrence admits a positive mother body supported on a tree with leaves at the zeros of $Q_k$.  Then prove that the zeros of $Sp_n$ select this mother body rather than the averaged measure $\nu_\dq$ itself.
\end{problem}

A possible approach to Problem~\ref{prob:finite-band-mother} is to study the limit
\[
        \lim_{n\to\infty}\frac{1}{n+1}\log |Sp_n(t)|
\]
directly as a solution of an obstacle-type problem.  The finite-band recurrence gives its expansion at infinity.  The missing part is a variational characterization selecting the smallest positive carrier of this exterior potential.  If such a characterization has the usual $S$-property, then the support should be a critical graph of a quadratic differential associated with the mother body, not with the Cauchy transform equation for a fixed Stieltjes polynomial.

The next observation isolates the remaining step in upgrading subsequential convergence to convergence.

\begin{proposition}[Reduction of convergence to uniqueness]
\label{prop:uniqueness-reduction}
Assume that every subsequential weak limit of the measures $\mu_n$ is supported on a compact set in $\Conv(Q_k)$ with connected complement and empty interior, and assume that there is at most one positive measure in this class whose logarithmic potential agrees near infinity with the finite-band potential $U_Q^{\rm ext}$.  Then the whole sequence $\mu_n$ converges weakly.  Its limit is this unique measure.

In particular, Conjecture~\ref{conj:mother-body} would imply the existence part of Conjecture~\ref{conj:main} once the corresponding positive tree mother body is shown to be unique.
\end{proposition}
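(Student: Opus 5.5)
The argument is a compactness-plus-uniqueness argument modelled on the proof of Theorem~\ref{th:determinacy}. The set of subsequential limits will be shown to consist of a single measure.

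\emph{Tightness.} By Theorem~\ref{th:loc}, for large $n$ all zeros of $Sp_n$ lie in a fixed compact neighbourhood of $K_Q$. Hence $\{\mu_n\}$ is a tight family of probability measures, and every subsequence has a further weakly convergent subsequence. Let $\mu$ be any such subsequential limit. It is a positive probability measure supported in $K_Q=\Conv(Q_k)$, and by hypothesis its support is a compact set with empty interior and connected complement. So $\mu$ belongs to the class singled out in the statement.

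\emph{Identifying the exterior potential.} Since the supports of $\mu_{n_\nu}$ eventually lie in a fixed compact set, for $|t|$ large the function $\log|t-z|$ is continuous and bounded in $z$ on that set. Weak convergence therefore gives $U_{n_\nu}(t)\to U^\mu(t)$ for $|t|$ large. Theorem~\ref{th:2} gives $U_n\to U_Q^{\rm ext}$ locally uniformly on $\Omega_Q$, so $U^\mu=U_Q^{\rm ext}$ in a neighbourhood of infinity. Alternatively, the moment identity of Theorem~\ref{th:1} together with the Laurent expansion of $\log|t-z|$ gives the same conclusion. Thus $\mu$ is a positive measure in the class whose potential agrees near infinity with the finite-band potential.

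\emph{Conclusion.} By the assumed uniqueness, every subsequential limit equals the same measure $\mu^*$. A tight sequence all of whose convergent subsequences share one limit converges to that limit, so $\mu_n\to\mu^*$ weakly. For the ``in particular'' clause: Conjecture~\ref{conj:mother-body} provides that the subsequential limits are tree-supported mother bodies. A tree is compact with empty interior and connected complement, so the first hypothesis holds. If one adds uniqueness of the positive tree mother body, the proposition gives convergence, which is the existence part of Conjecture~\ref{conj:main}.

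The only delicate step is the identification of $U^\mu$ near infinity. Here I would work with $|t|$ larger than the radius of a disk containing all zeros, for $n$ beyond the threshold of Theorem~\ref{th:loc}, so that weak convergence applies to a continuous integrand. Everything else is formal.
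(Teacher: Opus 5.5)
Your proof is correct and follows the paper's own argument: tightness from Theorem~\ref{th:loc}, identification of the potential near infinity for every subsequential limit, and then the uniqueness hypothesis forcing a single limit. The paper identifies that potential through the moments of Theorem~\ref{th:1}, which is your alternative route, rather than through $U_n\to U_Q^{\rm ext}$ from Theorem~\ref{th:2}. You also spell out the ``in particular'' clause, which the paper does not address in its proof.
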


\begin{proof}
The localization theorem quoted in the introduction makes the family $\{\mu_n\}$ tight.  Hence every subsequence has a further weakly convergent subsequence.  By Theorem~\ref{th:1}, any such subsequential limit has the same holomorphic moments, equivalently the same logarithmic potential in a neighbourhood of infinity.  The assumed uniqueness therefore forces all subsequential limits to coincide.  This is equivalent to weak convergence of the whole sequence.
\end{proof}

\section{A WKB route to the spectral tree}
\label{sec:wkb}

Although the previous section warns against identifying the spectral tree with the Stokes graph for one fixed Van Vleck polynomial, WKB ideas still seem to give the most promising route to the missing uniqueness statement.  The point is that the spectral polynomial should be viewed as recording the set of Van Vleck parameters for which the WKB problem admits a polynomially quantized solution.

Let
\[
        V_n(z)=-\lambda_n(z-t_n)
\]
be an admissible Van Vleck polynomial and let $S_n$ be the corresponding Stieltjes polynomial of degree $n$.  Put
\[
        \mathcal C_n(z)=\frac1n\frac{S_n'(z)}{S_n(z)}.
\]
Away from the zeros of $S_n$ and from the zeros of $Q_k$, the formal WKB rule
\[
        \frac{1}{n^i}\frac{S_n^{(i)}(z)}{S_n(z)}=\mathcal C_n(z)^i+O(n^{-1})
\]
turns the differential equation
\[
        (\dq-\lambda_n(z-t_n))S_n=0
\]
into the algebraic equation
\begin{equation}
\label{eq:wkb-cauchy}
        Q_k(z)C(z)^k=z-t,
\end{equation}
where $t=\lim t_n$ and $C$ is a limiting Cauchy transform of the zeros of $S_n$.  Thus the leading WKB differential attached to the spectral parameter $t$ is
\begin{equation}
\label{eq:wkb-differential}
        \omega_t=\left(\frac{z-t}{Q_k(z)}\right)^{1/k}dz,
\end{equation}
with the branch normalized by $\omega_t=(z^{-1}+O(z^{-2}))dz$ at infinity.  The lower coefficients of $\dq$ enter only in lower WKB orders.  This agrees with Theorem~\ref{th:1}: the leading exterior potential of the spectral roots depends only on $Q_k$.

Equation~\eqref{eq:wkb-cauchy} is the fixed-$t$ WKB equation.  The spectral problem asks for which $t$ the associated algebraic differential can be completed to a global polynomial solution.  The usual Bohr--Sommerfeld picture suggests the following necessary condition.  Let
\[
        \Pi_\gamma(t)=\int_\gamma \omega_t
\]
be a period of the algebraic differential on the curve
\[
        y^k=\frac{z-t}{Q_k(z)}.
\]
If $t$ is an accumulation point of Van Vleck roots, then the leading exponential factors of the WKB solutions must be single-valued in modulus along the relevant cuts.  Equivalently, for every active cycle one expects
\begin{equation}
\label{eq:period-real}
        \Re \Pi_\gamma(t)=0.
\end{equation}
At finite degree this condition should be replaced by a Bohr--Sommerfeld quantization condition of the form
\begin{equation}
\label{eq:bohr-sommerfeld}
        n\,\Pi_\gamma(t)+\Pi_\gamma^{(1)}(t)+O(n^{-1})
        \in 2\pi i\,\mathbb Z,
\end{equation}
where $\Pi_\gamma^{(1)}$ is a subleading correction depending on the lower coefficients of $\dq$ and on endpoint exponents.  Taking real parts and letting $n\to\infty$ gives \eqref{eq:period-real}.  Thus the lower coefficients should move individual spectral roots by $O(1/n)$ but should not move their limiting carrier.

This leads to a concrete prediction for the tree in the Van Vleck-parameter plane.

\begin{conjecture}[WKB description of the spectral tree]
\label{conj:wkb-tree}
For generic $Q_k$ with simple roots, the support $\Gamma_{Q_k}$ of the limiting spectral measure is the union of the admissible Stokes-period arcs
\begin{equation}
\label{eq:wkb-tree-locus}
        \Gamma_{Q_k}=\overline{\bigcup_\gamma E_\gamma},
\end{equation}
where $E_\gamma$ is the set of all $t\in\Conv(Q_k)$ such that
$\Re \Pi_\gamma(t)=0$ and the corresponding WKB cut system is positive.
Here the union is taken only over cycles which occur in an admissible critical graph of the differential \eqref{eq:wkb-differential}.  The endpoints of the tree occur when an active cycle collapses, and these endpoints are precisely the zeros of $Q_k$.
\end{conjecture}

The phrase ``positive'' in Conjecture~\ref{conj:wkb-tree} is important.  For a candidate cut system $\Gamma$ in the $z$-plane, the limiting zero density of the Stieltjes polynomials would be given by the jump formula
\begin{equation}
\label{eq:wkb-jump}
        d\sigma_t(z)=\frac{1}{2\pi i}
        \big(C_+(z)-C_-(z)\big)\,dz,
\end{equation}
where $C^k=(z-t)/Q_k(z)$.  Only those cuts for which \eqref{eq:wkb-jump} is a positive measure should contribute to the spectral locus.  In the canonical coordinate
\[
        W_t(z)=\int^z\omega_t,
\]
these cuts should be horizontal or vertical trajectories, depending on the convention, of the $k$-differential \eqref{eq:wkb-differential}.  The $S$-property of the resulting mother body is the potential-theoretic translation of the equality of the real parts of the WKB actions on the two sides of a cut.

The same formalism also predicts the density of the limiting measure on an edge of the spectral tree.  Suppose that along a smooth edge $E$ of $\Gamma_{Q_k}$ a single period $\Pi_\gamma(t)$ is active and that $\Re \Pi_\gamma(t)=0$ defines the edge locally.  The quantization condition \eqref{eq:bohr-sommerfeld} then suggests that the number of Van Vleck roots on a subarc of $E$ is asymptotic to the variation of the imaginary part of the active period.  In other words, the candidate line density is
\begin{equation}
\label{eq:wkb-density}
        d\mu_E(t)=\frac{1}{2\pi}
        \left|d\,\Im \Pi_\gamma(t)\right|,
\end{equation}
with the obvious modification when several periods are active.  The total density obtained from \eqref{eq:wkb-density} must then be normalized and checked against the moment formula \eqref{eq:moment-formula}.  This gives a practical test for the WKB prediction: the tree obtained from periods must reproduce the exterior Cauchy transform \eqref{eq:Cauchy-nu}.

The resulting strategy for proving existence of the limit of spectral polynomials is therefore the following.
\begin{enumerate}
\item Prove WKB asymptotics for polynomial solutions uniformly for $t$ away from Stokes transitions, with error estimates strong enough to justify \eqref{eq:bohr-sommerfeld}.
\item Classify the admissible positive Stokes graphs as $t$ varies in $\Conv(Q_k)$.
\item Show that the quantized solutions become equidistributed on the locus \eqref{eq:wkb-tree-locus} with density \eqref{eq:wkb-density}.
\item Verify that the resulting measure has the exterior moments \eqref{eq:moment-formula}.  By Proposition~\ref{prop:uniqueness-reduction}, uniqueness of the positive mother body then proves convergence of the whole sequence $\mu_n$.
\end{enumerate}

At present this should be regarded as a program rather than a proof.  Its advantage is that it predicts not only the existence of the limiting spectral measure, but also the geometry of its support.  The edges should be period-level curves in the Van Vleck parameter $t$, vertices should occur at Stokes transitions where two or more admissible periods become active, and the leaves should be the roots of $Q_k$.  This is exactly the tree-like behavior seen in Figures~\ref{fig2} and ~\ref{fig1}.

\section{Final Remarks}

\subsection*{1.} The discussion above changes the interpretation of the difficulty in the complex cubic case.  The problem is not a failure of Theorem~\ref{th:mainrec}; rather it is the expected non-uniqueness of a compactly supported measure with prescribed holomorphic moments in the plane.  That theorem gives only the exterior potential: the finite-band averaging procedure naturally produces a two-dimensional auxiliary measure, whereas the actual roots of the spectral polynomials appear to select a one-dimensional mother body for the same exterior potential; see Figure~\ref{fig2}.

\begin{problem}
For a complex cubic polynomial $Q(z)$, prove that the roots of the spectral polynomials converge to a positive mother-body measure for the exterior potential produced by the finite-band recurrence.  Describe its support and compare it with the compact obtained in \cite{ShT,STT}.
\end{problem}

Based on the numerical evidence and on the preceding section, we expect the following more precise picture:
\begin{itemize}
\item[(i)] the exterior potential of the limiting distribution of Van Vleck roots is independent of the lower coefficients of the operator and depends only on the leading polynomial $Q(z)$;
\item[(ii)] the actual limiting root-counting measure is a positive mother body for this exterior potential;
\item[(iii)] its support is a finite tree inside $\Conv(Q)$ with leaves at the roots of $Q$.
\end{itemize}

The moment theorem gives a concrete test for any proposed limiting tree.  If a candidate tree $\Gamma$ with a positive density $\rho(s)|ds|$ is proposed, then it must solve the infinite moment system
\[
        \int_\Gamma z^m\rho(s)|ds|=M_m(Q),\qquad m=0,1,2,\ldots,
\]
where $M_0=1$ and $M_m(Q)$ is given by \eqref{eq:moment-formula}.  This is a useful computational route: first reconstruct the exterior Cauchy transform from \eqref{eq:Cauchy-nu}, then search for the smallest positive carrier with these moments.  The expected carrier is the mother-body tree.

\begin{problem}[Moment reconstruction of the spectral tree]
Use the moment formula \eqref{eq:moment-formula} to characterize the unique positive tree measure, if it exists, whose support has leaves at the zeros of $Q_k$ and whose exterior potential is given by \eqref{eq:Cauchy-nu}.  In particular, decide whether positivity and the $S$-property force uniqueness of the tree.
\end{problem}

\begin{problem}[WKB periods versus finite-band moments]
For a fixed generic polynomial $Q_k$, compute the period locus \eqref{eq:wkb-tree-locus} and the WKB density \eqref{eq:wkb-density}.  Prove, at least in the cubic case, that its Cauchy transform has the same expansion at infinity as \eqref{eq:Cauchy-nu}.  This would identify the WKB tree with the mother body selected by the finite-band determinant.
\end{problem}

\begin{problem}[Discontinuity formula for the mother body]
Assume that the limiting spectral measure is supported on a finite union of analytic arcs.
Derive its density as the jump of the exterior Cauchy transform \eqref{eq:Cauchy-nu}
across these arcs.  In the cubic case this should turn the integral formula in the first
example into a scalar Riemann--Hilbert problem.  A solution of this problem would give a
practical route from the finite-band potential to the observed spectral tree.
\end{problem}

\begin{problem}[Universality with respect to lower coefficients]
Theorems~\ref{th:1} and~\ref{th:2} prove independence of the lower coefficients of $\dq$ at the level of the whole exterior potential.  Prove, or disprove by numerical examples, the stronger
statement that the lower coefficients do not affect the selected mother body.  Equivalently,
show that the passage from the exterior potential to the positive tree measure is canonical
for the class of spectral polynomials arising from high-order Heun operators.
\end{problem}

The preceding problems seem to be the most natural next steps.  The analytic part
should start with the explicitly computable cubic case.  The structural part asks
whether the finite-band moment theorem already contains the whole limiting object,
or only its exterior shadow.

\subsection*{2.} Even more difficulties occur when dealing with the more general 
Lam\'e equation \eqref{eq:comLame} since in this case van Vleck polynomials 
and their roots cannot be found by means of a determinantal equation. They
are in fact related to a more complicated situation when the rank of a 
certain non-square matrix is less than the maximal one, see \cite{Sh}. 

\begin{problem}
Describe the asymptotic distribution of the roots of all Van Vleck 
polynomials for equation \eqref{eq:comLame} when $n\to\infty$.
\end{problem} 

Figure~\ref{fig3} illustrates this asymptotic distribution. 

\begin{figure}[H]
\begin{center}
\IfFileExists{picQ4d2V.pdf}{\includegraphics[width=0.62\linewidth]{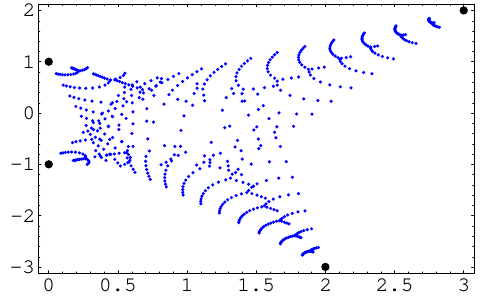}}{\fbox{\parbox{0.75\linewidth}{Figure file \texttt{picQ4d2V.pdf} not found.}}}
\end{center}
\caption{The union of the roots of 861 quadratic van Vleck polynomials 
corresponding to Stieltjes polynomials of degree $40$ for the classical 
Lam\'e equation $Q(z)S''(z)+\frac{Q'(z)}{2}S'(z)+V(z)S(z)=0$ with 
$Q(z)=(z^2+1)(z-3i-2)(z+2i-3)$.}
\label{fig3}
\end{figure}

\begin{ack}
The author  thanks his coauthor and friend of many years, Milo\v{s} Tater,  for his interest in this project.
\end{ack}

\end{document}